\documentclass[11pt]{article}
\usepackage[margin=1in]{geometry}
\usepackage{graphicx}
\usepackage{xspace}
\usepackage{xcolor}
\usepackage{amscd}
\usepackage{amsmath}
\usepackage{amssymb}
\usepackage{amstext}
\usepackage{amsthm}
\usepackage{bbold}
\usepackage{bm}
\usepackage{colonequals}
\usepackage{mathtools}
\usepackage{tikz}
\usepackage{booktabs,tabularx,array}
\usepackage{enumitem}
\usepackage{float}
\usepackage{algpseudocode}
\usepackage[ruled,vlined]{algorithm2e}
\usepackage{microtype}
\usepackage{hyperref}

\newcommand{\sE}{\mathcal{E}}

\newcommand{\sL}{\mathcal{L}}

\newcommand{\FF}{\mathbb{F}}

\newcommand{\PP}{\mathbb{P}}
\newcommand{\QQ}{\mathbb{Q}}

\DeclareSymbolFont{bbold}{U}{bbold}{m}{n}
\DeclareSymbolFontAlphabet{\mathbbold}{bbold}

\newcommand{\Adv}{\mathop{\mathsf{Adv}}}
\newcommand{\Var}{\mathrm{Var}}

\DeclareMathOperator*{\E}{\mathbb{E}}

\newcommand{\Unif}{\mathrm{Unif}}

\newcommand{\Bin}{\mathrm{Bin}}
\newcommand{\eps}{\varepsilon}

\newcommand{\RM}{\operatorname{RM}}
\newcommand{\Alt}{\operatorname{Alt}}
\newcommand{\rank}{\operatorname{rank}}
\newcommand{\bias}{\operatorname{bias}}
\newcommand{\Tnoise}{\mathsf{T}}

\newtheorem{theorem}{Theorem}[section]
\newtheorem{lemma}[theorem]{Lemma}
\newtheorem{proposition}[theorem]{Proposition}
\newtheorem{corollary}[theorem]{Corollary}
\newtheorem{conjecture}[theorem]{Conjecture}

\theoremstyle{definition}

\theoremstyle{remark}
\newtheorem{remark}[theorem]{Remark}

\setlist[itemize]{leftmargin=1.5em,itemsep=0.3em,topsep=0.4em}
\setlist[enumerate]{leftmargin=1.7em,itemsep=0.3em,topsep=0.4em}
\newcolumntype{Y}{>{\raggedright\arraybackslash}X}

\allowdisplaybreaks

\title{The Polynomial-Time Low-Degree Conjecture is False}
\author{Songtao Mao\thanks{\texttt{smao13@jhu.edu}, Department of Computer Science, Johns Hopkins University.}}
\date{}

\begin{document}
\maketitle

\begin{abstract}
The low-degree method and its associated lower bounds are widely used to guide algorithm design and to provide evidence of computational hardness in average-case inference, high-dimensional statistics, random optimization, and related problems. This led to the low-degree conjecture, which predicts that when the low-degree advantage between a planted distribution and a uniform null distribution remains bounded, no efficient distinguisher can succeed after independent noise, provided that the planted distribution has permutation symmetry. Several works have produced counterexamples to variants of this conjecture or to versions for algorithms with higher time complexity, but the conjecture remained open in its standard binary, polynomial-time formulation.

We disprove the polynomial-time low-degree conjecture by giving a family of examples in this setting. For every fixed integer $r\geq3$, we construct a permutation-invariant distribution $\mathbb{P}_n$ on simple graphs, with $\mathbb{Q}_n=G(n,1/2)$, such that every marginal of $\mathbb{P}_n$ on at most $D_n=\Theta((\log n)^{r-1})$ edges is uniform. Therefore, the low-degree advantage is zero through degree $D_n$. Nevertheless, after every edge is independently resampled at a fixed positive rate, a deterministic rank test strongly distinguishes the resulting distribution from $\mathbb{Q}_n$ in polynomial time. 

The construction chooses a subspace of a Reed--Muller code whose nonzero polynomials have small absolute bias, selects points whose evaluation vectors have no short linear dependencies, and evaluates a random alternating bilinear form on pairs of these vectors. Our result shows that low-degree indistinguishability, a uniform null distribution, permutation invariance, and independent resampling do not by themselves imply polynomial-time hardness, and suggests that a valid general conjecture must impose an additional condition.
\end{abstract}
\newpage
\tableofcontents
\newpage

\section{Introduction}
\label{sec:introduction}

Many high-dimensional inference problems exhibit a statistical--computational gap: a signal is detectable statistically, yet no efficient detection algorithm is known. Understanding such gaps is a central problem in average-case complexity. One of the main tools for it is the low-degree method.

\paragraph{The low-degree method}

The low-degree method studies bounded-degree polynomials as a model of efficient computation and uses their performance to predict statistical--computational gaps. It grew from the connection between polynomial tests and the sum-of-squares hierarchy, including pseudo-calibration and low-degree moment constructions for semidefinite relaxations \cite{BHKMP2019,HKPSS2017}. Hopkins and Steurer \cite{HopkinsSteurer2017} used low-degree polynomials to study Bayesian estimation and community detection, and Kunisky, Wein, and Bandeira \cite{KWB2019} developed the likelihood-ratio formulation now commonly used for hypothesis testing.

The method has been used across several classes of problems. In graph models, examples include community detection, planted dense subhypergraphs, correlated random graphs, and random-graph certification \cite{HopkinsSteurer2017,ChenDingGongLi2026,DhawanMaoWein2025,DingDuLi2025,BBKMW2021}. Closely related local-statistics semidefinite relaxations have also been studied for graph lifts \cite{KuniskyYu2024}. Matrix and tensor applications include planted-subspace recovery, correlated spiked models, tensor PCA, and restricted-isometry certification \cite{MaoWein2025,BandeiraKuniskyWein2020,DKWB2024,Li2025CorrelatedSpiked,ChooDorsi2021,DKWB2021}. The Franz--Parisi criterion \cite{BandeiraEtAl2022FranzParisi} gives a further connection between low-degree bounds and free-energy methods. The method has also been applied to estimation, random $k$-SAT, maximum independent set, Boolean optimization, and cryptography \cite{SchrammWein2022,BreslerHuang2022,Wein2021MIS,GamarnikJagannathWein2024,BogdanovKothariRosen2023,Mao2026}. The connection with efficient computation is rigorous for statistical-query algorithms and, in some matrix models, approximate message passing \cite{BBHLS2021,MontanariWein2025}. These results explain why low-degree lower bounds are widely used as evidence of computational hardness. For more complete introductions, see \cite{KWB2019,Wein2025Survey}.

For a distribution $\mathsf R$, write $\E_{\mathsf R}$ and $\Var_{\mathsf R}$ for expectation and variance. Let $\PP$ and $\QQ$ be distributions on $\{0,1\}^N$ with $\PP\ll\QQ$, and let $\sL=\mathrm d\PP/\mathrm d\QQ$. Thresholding the likelihood ratio gives the Neyman--Pearson optimal tests. Write $\sL^{\leq d}$ for the orthogonal projection of $\sL$ in $L^2(\QQ)$ onto the real multilinear polynomials of degree at most $d$, where $\|f\|_{L^2(\QQ)}=(\E_{\QQ}[f^2])^{1/2}$. For $d\geq1$, define
\begin{equation}
  \label{eq:lda}
  \Adv_{\leq d}(\PP,\QQ)
  :=\sup_{\substack{f:\,\deg(f)\leq d\\\Var_{\QQ}(f)>0}}
  \frac{|\E_{\PP}f-\E_{\QQ}f|}{\sqrt{\Var_{\QQ}(f)}}
  =\left\|\sL^{\leq d}-1\right\|_{L^2(\QQ)}.
\end{equation}
For $d=0$, define $\Adv_{\leq0}(\PP,\QQ):=0$.
Since $\|\sL^{\leq d}\|_{L^2(\QQ)}^2=1+\Adv_{\leq d}(\PP,\QQ)^2$, the advantage is zero exactly when $\PP$ and $\QQ$ give the same expectation to every real polynomial of degree at most $d$.

\paragraph{The low-degree conjecture}

The low-degree conjecture was proposed to turn the predictive success of low-degree calculations into a general claim about computation. These calculations were first closely tied to the sum-of-squares hierarchy and later became a standard way to estimate algorithmic thresholds in high-dimensional inference \cite{BHKMP2019,HopkinsSteurer2017,HKPSS2017,Hopkins2018}. Hopkins \cite{Hopkins2018} proposed a general connection between low-degree indistinguishability and computational hardness. In the polynomial-time form considered here, if polynomials of degree at most $(\log n)^{1+c}$ have only bounded advantage for some constant $c>0$, then a fixed positive rate of independent resampling should make polynomial-time strong distinction impossible. Later work expressed related conjectures through the low-degree likelihood ratio and considered algorithms with higher running times \cite{KWB2019,DKWB2021,DKWB2024,BHJK2025}. More recently, Hsieh et al. \cite{hsieh2026rigorous} established rigorous consequences of low-degree indistinguishability under invariance with respect to all coordinate permutations, a stronger symmetry assumption than invariance under vertex relabeling of graph edges.

For graph-valued observations, let $[n]=\{1,\ldots,n\}$ and $\sE_n=\binom{[n]}2$. For a permutation $\tau$ of $[n]$ and $x\in\{0,1\}^{\sE_n}$, define $(\tau x)_{\{i,j\}}:=x_{\{\tau^{-1}(i),\tau^{-1}(j)\}}$; a graph distribution is permutation-invariant if $\tau X$ and $X$ have the same distribution for every $\tau$. If $\QQ_n$ is the uniform distribution on $\{0,1\}^{\sE_n}$ and $\mathsf R_n$ is another distribution on the same space, define $\Tnoise_{\eps,\QQ_n}\mathsf R_n$ by independently replacing each edge of a sample from $\mathsf R_n$, with probability $\eps\in[0,1]$, by a fresh uniform bit. We write $\Tnoise_\eps$ when $\QQ_n$ is clear. A possibly randomized test $\varphi_n$ strongly distinguishes $\mathsf R_n$ from $\QQ_n$ if $\Pr_{Y\sim\mathsf R_n}\{\varphi_n(Y)=\textsc{Planted}\}=1-o(1)$ and $\Pr_{Y\sim\QQ_n}\{\varphi_n(Y)=\textsc{Null}\}=1-o(1)$, where the probabilities include the internal randomness of the test.

The advantage in Equation~\eqref{eq:lda} gives a direct formulation of the hypothesis. For an integer sequence $D_n$, the condition $\Adv_{\leq D_n}(\PP_n,\QQ_n)=O(1)$ says that every centered, variance-one polynomial of degree at most $D_n$ has bounded distinguishing advantage. Agreement on every marginal involving at most $D_n$ coordinates implies the stronger equality $\Adv_{\leq D_n}(\PP_n,\QQ_n)=0$. The following is the graph-valued, polynomial-time form of Hopkins's conjecture \cite[Conjecture~2.2.4]{Hopkins2018}, expressed in the low-degree-advantage language used by Buhai et al.\ \cite[Conjecture~1.2]{BHJK2025}.

\begin{conjecture}[Polynomial-time low-degree conjecture]\label{conj}
Let $\QQ_n$ be the uniform distribution on $\{0,1\}^{\binom n2}$, and let $\PP_n$ be a distribution on the same space that is invariant under every relabeling of the $n$ vertices. Suppose that there are a constant $c>0$ and integers $D_n\geq(\log n)^{1+c}$ such that
\[
  \Adv_{\leq D_n}(\PP_n,\QQ_n)=O(1).
\]
Then, for every fixed resampling rate $\eps\in(0,1)$, no polynomial-time test strongly distinguishes $\Tnoise_{\eps,\QQ_n}\PP_n$ from $\QQ_n$.
\end{conjecture}

The uniform null distribution makes the edge coordinates independent. The permutation invariance of $\PP_n$ prevents an algorithm from exploiting fixed distinguished vertices or edge coordinates, while the resampling condition is intended to rule out brittle algebraic procedures, such as Gaussian elimination. Several works have tested the limits of these assumptions. Holmgren and Wein \cite{HW2021} gave a symmetric real-valued counterexample to the original noise formulation and a Boolean counterexample without permutation invariance. Later, Buhai et al. \cite{BHJK2025} gave permutation-invariant examples with a quasipolynomial-time distinguisher. More recently, Jia and Vijayaraghavan \cite{JiaVijayaraghavan2026} gave a polynomial-time test for continuous robust subspace recovery with a nonproduct null. Thus, these results do not settle the standard polynomial-time conjecture for graphs, but show several ways algorithms can go beyond low-degree predictions. In this work, we show that all the stated hypotheses can hold while a polynomial-time distinguisher still exists after resampling. From now on, let $\QQ_n$ be the uniform distribution on $\{0,1\}^{\sE_n}$, equivalently $G(n,1/2)$.

\subsection{Our result}

We can now state the main result, which disproves the conjecture.

\begin{theorem}[Conjecture~\ref{conj} is false]
\label{thm:main}
For every fixed integer $r\geq3$, there are constants $c_r,C_r>0$, a constant $\eps_r\in(0,1)$, and a sequence of permutation-invariant distributions $\PP_n$ on $\{0,1\}^{\sE_n}$ such that, for
\[
  D_n=\left\lfloor c_r(\log_2 n)^{r-1}\right\rfloor,
\]
the following statements hold for all sufficiently large $n$. Since $\QQ_n$ has full support, write $\sL_n:=\mathrm d\PP_n/\mathrm d\QQ_n$.
\begin{enumerate}
  \item Every marginal of $\PP_n$ on at most $D_n$ edges is exactly uniform. Therefore, the orthogonal projection of $\sL_n-1$ onto real polynomials of degree at most $D_n$ is zero; equivalently, $\Adv_{\leq D_n}(\PP_n,\QQ_n)=0$.
  \item A deterministic algorithm running in time $n^{C_r}$ strongly distinguishes $\Tnoise_{\eps_r}\PP_n$ from $\QQ_n$.
\end{enumerate}
Moreover, there is a randomized nonuniform algorithm that uses $O(n(\log n)^r)$ bits of advice, samples exactly from $\PP_n$, and runs in expected polynomial time.
\end{theorem}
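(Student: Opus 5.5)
We would follow the construction sketched in the abstract. Fix $r\ge3$ and let $m=\Theta(\log n)$. Inside the Reed--Muller code $\RM(r,m)$ we choose a subspace $V$ of dimension $k$ all of whose nonzero polynomials have absolute bias at most $2^{-\delta m}$. One way to get such a $V$ is a random subspace together with a union bound using weight-distribution bounds for $\RM(r,m)$. Another is to take polynomials that are $r$-linear in disjoint blocks of variables, where the bias is governed by tensor rank. Write $\phi(x)\in\FF_2^k$ for the evaluation vector of a basis of $V$ at the point $x\in\FF_2^m$. We then pick $n$ points $x_1,\dots,x_n$ so that every set of at most $2D_n$ vectors $\phi(x_i)$ is linearly independent. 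By duality this asks that the code $\{(f(x_i))_i : f\in V\}^{\perp}$ have large distance. We would get it from a random choice of points in $\FF_2^m$ with $n\approx 2^{m/C}$. For a fixed set $S$ of size $s$, the probability that some nonzero $f\in V$ vanishes on $S$ is at most $2^k\cdot((1+2^{-\delta m})/2)^{s}$. It is the dual distance $\Theta(\binom{m}{\le r-1})=\Theta((\log n)^{r-1})$ that should set $D_n$. The sampler is then: draw a uniformly random alternating bilinear form $B$ on $\FF_2^k$, output the edge $\{i,j\}$ iff $B(\phi(x_i),\phi(x_j))=1$, and relabel the vertices by a uniformly random permutation. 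The points $x_i$ are the $O(n(\log n)^r)$ bits of advice, and the random permutation gives permutation invariance.

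For part 1, take a set $E$ of at most $D_n$ edges and let $U$ be the set of its endpoints, so $|U|\le 2D_n$. The vectors $\phi(x_u)$, $u\in U$, are linearly independent. Extend them to a basis of $\FF_2^k$. Since $B$ is a uniformly random alternating form, its values $B(e_a,e_b)$ for $a<b$ are independent uniform bits in any basis. So the edge bits on $U$ are i.i.d.\ uniform, and in particular so are the bits on $E$. Averaging over the random relabeling keeps this true. Uniform marginals on $D_n$ coordinates imply $\sL_n^{\le D_n}=1$, hence zero advantage.

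For part 2, the planted adjacency matrix $A=\Phi^{T}B\Phi$ has $\FF_2$-rank at most $k$. Here $\Phi$ is $k\times n$, and $k$ is polylogarithmic or $n^{1-\Omega(1)}$, whereas under $\QQ_n$ the rank is $n-O(1)$ with high probability. After resampling, $Y=A\oplus N$, where $N$ is a sparse-ish symmetric noise matrix, and the plain rank test breaks. So we would instead use a test that exploits the low-bias structure. The plan is to test a polynomial-size family of structured submatrices. Take principal submatrices on $t=\Theta(\log n)$ vertices, or, better, exploit that $\phi(x_i)$ are evaluations of degree-$r$ polynomials. For these, the planted correlations $\E[Y_{ij}Y_{jl}\cdots]$ along algebraically dependent tuples are bounded away from $0$ by $(1-\eps)^{\text{size}}$, while they vanish under $\QQ_n$. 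The detector would count, over all vertex subsets of size $O(1)\cdot D$, a statistic that only needs $\mathrm{poly}(n)$ enumeration because $\log$-size dependent sets exist among $2^{O(\log n)}$ candidates. A cleaner alternative is to pick $\eps_r$ small enough that the expected number of flipped entries in a random $t\times t$ block with $t=C\log n$ is below $1$. Then a constant fraction of blocks are noise-free and have rank at most $k$, which is below the $t-O(1)$ rank of null blocks once $t>k$. This needs $k<t$, i.e.\ the rank test applies to blocks on a subset of vertices whose $\phi$-vectors span a small subspace. We would search for such subsets via the advice-free structure, namely points agreeing in many coordinates.

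The main obstacle is this detection step. Low-degree indistinguishability up to $(\log n)^{r-1}$ forces any distinguisher to look at dependencies of size beyond polylog. The noise, however, destroys large exact linear dependencies unless the test localizes to many small, noise-free, rank-deficient patterns. We would try first to choose parameters so that $k=\Theta((\log n)^{r})$ is only polylogarithmically larger than $D_n$. We would then run the rank test on submatrices of size $\Theta(k)$ indexed by vertex pairs $(R,C)$, with both sides found greedily in polynomial time. The robustness to an $\eps_r$ fraction of flips would come from a union bound over low-weight corruptions of a rank-$k$ matrix of size $2k\times 2k$. Under $\QQ_n$, such a matrix is within Hamming distance $\eps_r k^2$ of rank $k$ with probability only $2^{-\Omega(k^2)}$.
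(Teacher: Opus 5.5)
Your construction and your proof of item~1 agree with the paper's: a low-bias subspace of $\RM(m,r)$, points with no short dependency among their evaluation vectors, a uniform alternating form, and a uniform relabeling. One estimate there targets the wrong event. Your bound $2^k((1+2^{-\delta m})/2)^{s}$ controls whether $\{\phi(x_i)\}_{i\in S}$ \emph{spans} $\FF_2^k$. What you need instead is, for each $I$ with $|I|=w\le L$, the Fourier bound $\Pr\{\sum_{i\in I}\phi(x_i)=0\}\le 2^{-k}+\eta^{w}$, followed by a union bound over at most $n^w$ sets.

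The genuine gap is item~2: none of your tests is both polynomial-time and tolerant of a fixed resampling rate.
\begin{itemize}
\item \emph{Correlations along dependent tuples.} $D_n$-wise uniformity survives resampling, so a character $\prod_{e\in E}(-1)^{Y_e}$ has nonzero planted mean only if $|E|>D_n$. Its mean is then at most $(1-\eps)^{D_n}=n^{-\Omega((\log n)^{r-2})}$ in absolute value. Dependent sets are not of logarithmic size when $r\ge3$, and enumerating candidate supports costs $n^{\Omega(D_n)}$.
\item \emph{Noise-free $t\times t$ blocks.} For fixed $\eps$, the expected number of resampled entries is $\eps\binom t2\to\infty$ already at $t=C\log n$. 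The rank-$k$ bound only bites when $t>k=\Theta((\log n)^r)$, and such a block is noise-free with probability $(1-\eps)^{\Omega(k^2)}$.
\item \emph{Subsets with small span.} Every set of at most $L$ vertices carries linearly independent vectors, so the subgraph it induces is distributed exactly as under $\QQ_n$. Larger sets are noise-free only with probability $(1-\eps)^{\Omega(L^2)}$, and the test cannot locate them anyway, since the latent points are hidden.
\item \emph{The $2k\times 2k$ test.} Asking whether a block is within Hamming distance about $\eps k^2$ of rank at most $k$ is statistically sound by your counting. But it is an $\FF_2$ low-rank approximation problem for which you give no algorithm. Brute force costs $2^{\Theta(k^2)}=2^{\Theta((\log n)^{2r})}$, which is quasipolynomial, the regime Buhai et al.\ already reached.
\end{itemize}

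You gesture at the right structure when you note that the $\phi(x_i)$ come from degree-$r$ polynomials. The missing idea is to exploit the $m=\Theta(\log n)$ latent bits of each vertex, rather than the rank-$k$ structure of the whole adjacency matrix, and to read only $T=a\log n$ edges per column.
\begin{itemize}
\item Fix $T$ anchor vertices and let $z_j$ be the latent point of vertex $j$. Suppose no anchor edge of a nonanchor $j$ was resampled; this happens with probability $(1-\eps)^T=n^{-a\lambda+o(1)}$, hence for polynomially many $j$. Then $y_j=(Y_{1j},\ldots,Y_{Tj})=(g_1(z_j),\ldots,g_T(z_j))$ with $g_i(x)=B(\phi(z_i),\phi(x))\in\RM(m,r)$.
\item Lift $y_j$ by all monomials of degree at most $s=\sigma\log n$. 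Each entry of a clean column becomes the value at $z_j$ of a polynomial in $\RM(m,rs)$. So the clean columns have rank at most $\binom{m}{\le rs}=n^{bh_2(r\sigma/b)+o(1)}$.
\item Each remaining column raises the rank by at most one; this is where the noise tolerance comes from.
\item Under $\QQ_n$ the $y_j$ are, with high probability, distinct uniform points. Since $\binom{T}{\le s}=n^{ah_2(\sigma/a)+o(1)}\gg n$, the random-puncturing theorem for Reed--Muller codes (Lemma~\ref{lem:bhss}) makes the lifted columns linearly independent.
\item Choose $\sigma$ small, then $a$ large, then $\eps$ small, so that $bh_2(r\sigma/b)<1-a\lambda$ and $ah_2(\sigma/a)>1$. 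The column-rank deficit, computed by Gaussian elimination, is then zero under $\QQ_n$ and at least $\mu_n/2=n^{1-a\lambda+o(1)}$ under $\Tnoise_\eps\PP_n$.
\end{itemize}
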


Taking $r=3$ gives exact $D_n$-wise independence for $D_n=\Theta((\log n)^2)$ together with a polynomial-time distinguisher that succeeds after a fixed positive rate of resampling. Compared with \cite{BHJK2025}, our construction fools polynomials only up to a smaller degree, but the graph distinguisher runs in polynomial rather than quasipolynomial time. A sequence of distributions $(\mathsf R_n)$ is uniformly samplable if a single randomized algorithm, given $1^n$, samples exactly from $\mathsf R_n$ in expected time polynomial in $n$. The planted distribution is chosen nonconstructively, and finding a uniformly samplable example remains open.

\subsection{Technical overview}

The proof uses the idea in \cite[Observation~1.5]{BHJK2025}: agreement on every small set of coordinates makes all low-degree statistics match the null, while an algebraic test can still distinguish after independent noise. Our construction differs from the one used there and yields a test for graphs that runs in polynomial rather than quasipolynomial time.

The Reed--Muller space $\RM(m,r)$ consists of functions on all $2^m$ points of $\FF_2^m$ and obeys a simple product rule: the pointwise product of $s$ functions in $\RM(m,r)$ belongs to $\RM(m,rs)$. For fixed $r$, the space $\RM(m,r)$ has dimension $\Theta(m^r)$, and we can pass to a subspace of the same order of dimension in which every nonzero polynomial has small absolute bias. This allows us to choose $n$ evaluation points with no linear dependence involving at most $L=\Theta(m^{r-1})$ evaluation vectors.

\paragraph{Choosing the Reed--Muller subspace.}
We first choose a $K$-dimensional subspace $W\leq\RM(m,r)$ of the degree-$r$ Reed--Muller space over $\FF_2$, where $m=\Theta(\log n)$ and $K=\Theta((\log n)^r)$, such that every nonzero element of $W$ has absolute bias at most $n^{-3}$. A result of Ben-Eliezer, Hod, and Lovett \cite{BHL2012} bounds the probability that a random Reed--Muller polynomial has large absolute bias; an averaging argument then produces the required subspace.

\paragraph{Choosing the evaluation points and defining the planted distribution.}
After fixing a basis of $W$, a Fourier calculation and a union bound produce evaluation points $x_1,\ldots,x_n\in\FF_2^m$ whose evaluation vectors $v_1,\ldots,v_n\in\FF_2^K$ have the property that every subset of at most $L=\Theta((\log n)^{r-1})$ vectors is linearly independent. We sample an alternating bilinear form $B$ uniformly from the forms on $\FF_2^K$, evaluate $B(v_i,v_j)$ on every vertex pair, and finally apply a uniform vertex permutation. The permutation makes the distribution invariant under every relabeling of the vertices. Any $t$ selected edges involve at most $2t$ evaluation vectors. After a change of basis, their values are distinct coordinates of $B$ and hence independent uniform bits. Thus every marginal on at most $\lfloor L/2\rfloor$ edges is uniform, and the low-degree advantage vanishes through the same degree.

\paragraph{Distinguishing after resampling.}
The distinguisher chooses $T=\Theta(\log n)$ anchor vertices and records the $T$-bit vector of edges from each remaining vertex to the anchors. It evaluates every monomial of degree at most $s=\Theta(\log n)$ on these vectors. Under $\QQ_n$, the vectors are independent and uniform in $\FF_2^T$. A theorem of Bhandari et al. \cite{BHSS2022} on random puncturings of Reed--Muller codes implies that the resulting columns are linearly independent. Under the resampled planted distribution, polynomially many nonanchor vertices retain all their anchor edges. For these vertices, every lifted coordinate is a polynomial of degree at most $rs$ in only $m=\Theta(\log n)$ latent variables, so their columns lie in a much smaller subspace. Each remaining column can raise the rank by at most one, leaving a detectable deficit. The required inequalities among the constants can be satisfied simultaneously.

\subsection{Organization and notation}
\paragraph{Organization.}
Section~\ref{sec:distributions} constructs $\PP_n$, proves that every small collection of edges is uniformly distributed, and derives the vanishing low-degree advantage. Section~\ref{sec:distinguisher} defines the rank test, analyzes it under both distributions, and proves that it runs in polynomial time. Section~\ref{sec:discussion} explains the scope of the result and discusses some open problems.

\paragraph{Notation and conventions.}
We write $[n]=\{1,\ldots,n\}$. For a finite set $S$, $\binom{S}{k}$ denotes the collection of its $k$-element subsets, while for an integer $M$, $\binom{M}{k}$ denotes their number. For a finite set $S$, $\Unif(S)$ denotes its uniform distribution, and $X\sim\mathsf R$ means that $X$ has distribution $\mathsf R$. We write $G(n,p)$ for the distribution on graphs in which the edges are independently present with probability $p$. For a distribution $\mathsf R$, $\E_{\mathsf R}$ and $\Var_{\mathsf R}$ denote expectation and variance, with $\Var_{\mathsf R}(f)=\E_{\mathsf R}[(f-\E_{\mathsf R}f)^2]$. A variable subscript on $\Pr$ or $\E$, as in $\Pr_f$ or $\E_x$, indicates probability or expectation over that variable with the distribution specified in context; in particular, $\E_{x\in S}$ denotes uniform averaging over a finite set $S$. Unsubscripted $\Pr$ and $\E$ include all random choices then in force. We write $\Bin(N,p)$ for the binomial distribution with $N$ trials and success probability $p$.

Equation~\eqref{eq:lda} defines the centered low-degree advantage $\Adv_{\leq d}(\PP,\QQ)$ for $d\geq1$; for $d=0$, we use the separate definition above. From Section~\ref{sec:distributions}, $\QQ_n=\Unif(\{0,1\}^{\sE_n})=G(n,1/2)$, so $\Tnoise_\eps$ means $\Tnoise_{\eps,\QQ_n}$. Once $\PP_n$ is defined, we write $\sL_n=\mathrm d\PP_n/\mathrm d\QQ_n$ and use $\sL_n^{\leq d}$ for its orthogonal projection onto the real multilinear polynomials of degree at most $d$ in $L^2(\QQ_n)$.

Throughout, $r\geq3$ is fixed and $n\to\infty$; constants may depend on $r$, and all logarithms are base $2$. For positive sequences, we write $A_n\gg B_n$ when $A_n/B_n\to\infty$. The symbol $\FF_2$ denotes the binary field, $W\leq V$ means that $W$ is a linear subspace of $V$, and $a\cdot v$ is the standard dot product over $\FF_2$. The Reed--Muller space $\RM(m,d)$ consists of the functions $\FF_2^m\to\FF_2$ represented by multilinear polynomials of total degree at most $d$. Unless stated otherwise, the finite-dimensional algebraic objects in Sections~\ref{sec:distributions} and~\ref{sec:distinguisher}---vector spaces, bilinear forms, evaluation matrices, and matrix ranks---are over $\FF_2$. The spaces $L^2(\QQ_n)$ and their polynomial subspaces are real Hilbert spaces. Reed--Muller polynomials have coefficients in $\FF_2$, whereas the polynomials used as low-degree tests are real-valued functions of the graph's edge indicators. We write $\binom{m}{\leq d}=\sum_{j=0}^d\binom mj$ and $h_2(u)=-u\log_2u-(1-u)\log_2(1-u)$ for $u\in[0,1]$, with $0\log_20=0$. Graph edges are unordered, and degree means multilinear total degree. An event holds with high probability if it has probability $1-o(1)$.

\section{Distributions with zero low-degree advantage}
\label{sec:distributions}
In this section, we present the construction of $\PP_n$ and prove that its low-degree advantage against $\QQ_n$ is zero through degree $\Theta((\log n)^{r-1})$. We first choose the Reed--Muller subspace and evaluation points, and then use a random alternating form to define the graph distribution.

\subsection{Background on Reed--Muller codes}
In this subsection, we choose a Reed--Muller subspace with small absolute bias and evaluation vectors with no short linear dependence.

The construction and the distinguisher use two results about Reed--Muller codes. Ben-Eliezer, Hod, and Lovett \cite{BHL2012} bound the probability that a random low-degree polynomial has large bias. Bhandari et al. \cite{BHSS2022} show that evaluating all low-degree monomials on a sufficiently small random set of points gives linearly independent columns with high probability. We state each result at the point where it is used.

Recall that $\sE_n=\binom{[n]}2$ is the set of unordered vertex pairs. The null distribution is
\[
  \QQ_n=\Unif(\{0,1\}^{\sE_n}).
\]
Equivalently, its edge indicators are independent uniform bits. To define the planted distribution, we first choose a subspace of a Reed--Muller code and a set of evaluation points.

\paragraph{A subspace of a Reed--Muller code}

For $f:\FF_2^m\to\FF_2$, define
\[
  \bias(f)=\E_{x\in\FF_2^m}(-1)^{f(x)}.
\]
For integers $0\leq r\leq m$, the Reed--Muller space $\RM(m,r)$ is the $\FF_2$-vector space of functions $f:\FF_2^m\to\FF_2$ represented by multilinear polynomials of total degree at most $r$. It has dimension $\binom{m}{\leq r}$. We use the following bound from \cite{BHL2012} on the probability that a random polynomial has large bias.

\begin{lemma}\cite[Lemma 1.2]{BHL2012}
\label{lem:bhl}
There are constants $\beta_1,\beta_2>0$ such that, whenever $m$ is sufficiently large and $1\leq r\leq m/2$, a uniform polynomial $f\in\RM(m,r)$ satisfies
\[
  \Pr_f\left\{|\bias(f)|>2^{-\beta_1m/r}\right\}
  \leq 2^{-\beta_2\binom{m}{\leq r}}.
\]
Here ``uniform'' means that the coefficients of all multilinear monomials of degree at most $r$ are independent uniform bits.
\end{lemma}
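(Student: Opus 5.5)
The plan is the moment method: bound an even moment of $\bias(f)$ over uniform $f\in\RM(m,r)$, then apply Markov's inequality. Fix an even $k$ to be chosen later. Let $\mathrm{ev}_r(x)\in\FF_2^{\binom{m}{\leq r}}$ denote the vector of values of all monomials of degree at most $r$ at $x$. Expanding the moment gives
\[
\E_f\left[\bias(f)^k\right]
=\E_{x_1,\ldots,x_k}\E_f(-1)^{f(x_1)+\cdots+f(x_k)}
=\Pr_{x_1,\ldots,x_k}\Bigl\{\textstyle\sum_{i}\mathrm{ev}_r(x_i)=0\Bigr\}.
\]
This holds because the coefficients of $f$ are independent uniform bits, so the inner average is $1$ when the summed evaluation vector vanishes and $0$ otherwise. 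Since $k$ is even, $\bias^k=|\bias|^k$, and Markov's inequality gives
\[
\Pr_f\bigl\{|\bias(f)|>\eta\bigr\}\leq \eta^{-k}\,\Pr\Bigl\{\textstyle\sum_i\mathrm{ev}_r(x_i)=0\Bigr\}
\]
with $\eta=2^{-\beta_1m/r}$. The natural choice is $k\approx c\binom{m}{\leq r}\cdot r/m$, which is of order $\binom{m}{\leq r-1}$. Then $\eta^{-k}=2^{c\beta_1\binom{m}{\leq r}}$, so it suffices to show that the zero-sum probability is at most $2^{-c'\binom{m}{\leq r}}$ with $c'$ a fixed multiple of $c$ that dominates $c\beta_1$. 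Taking $\beta_1$ small then leaves a gap $\beta_2\binom{m}{\leq r}$.

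To bound the zero-sum probability, let $S\subseteq\FF_2^m$ be the set of points appearing an odd number of times among $x_1,\ldots,x_k$. The condition $\sum_i\mathrm{ev}_r(x_i)=0$ says exactly that the indicator $1_S$ is orthogonal to $\RM(m,r)$. Equivalently, $1_S$ is a codeword of the dual code $\RM(m,m-r-1)$ of weight $w=|S|\leq k$, with $w\equiv k\pmod 2$.

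Condition on $S$ and on which indices realize it. The remaining $k-w$ indices must then pair up into coincident points, which leaves roughly $2^{m(k-w)/2}$ free choices out of $2^{m(k-w)}$. The $w$ indices mapped onto $S$ contribute $2^{-mw}$. Up to combinatorial factors $k^{O(k)}$, this yields
\[
\Pr\Bigl\{\textstyle\sum_i\mathrm{ev}_r(x_i)=0\Bigr\}\leq k^{O(k)}\sum_{w\leq k} N_w\,2^{-mw}\,2^{-m(k-w)/2},
\]
where $N_w$ is the number of codewords of $\RM(m,m-r-1)$ of weight $w$. The $w=0$ term is $k^{O(k)}2^{-mk/2}=2^{-\Omega(km)}$ because $k\ll 2^{m}$, and $km/r$ is of order $\binom{m}{\leq r}$, which is the required size.

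The main obstacle is the terms with $w>0$. There I need a weight-distribution bound for low-weight codewords of the dual Reed--Muller code, of the form $N_w\leq 2^{(1/2-\delta)mw}$ for all $w\leq k$. I would try to prove this by using the minimum distance $2^{r+1}$ of the dual code. Beyond that, each low-weight dual codeword is supported on a set that is ``structured'': its points satisfy every degree-$r$ relation, so after fixing about $w/(\text{something depending on }r)$ of its points, the rest are constrained. Concretely, I would order the points of $S$ and argue that each new point beyond an independent prefix lies in a set whose size is forced down by the degree-$r$ orthogonality constraints. Only about $w/2$ points can be chosen freely, giving $2^{m w/2}$ choices with a loss that is absorbed into $\delta$.

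If this direct counting proves too weak, the fallback is to cite Kaufman--Lovett--Porat-type weight-distribution bounds for Reed--Muller codes. Either way, the final step is to tune the constants $c,\beta_1,\beta_2$, using $r\leq m/2$ and $m$ large, so that all the $k^{O(k)}$ factors are dominated by the exponent gap.
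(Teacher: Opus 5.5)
The paper gives no proof of this lemma; it is imported from \cite{BHL2012}. So your argument must stand on its own, and it has a genuine gap. Your opening steps are correct: the identity $\E_f[\bias(f)^k]=\Pr\{\sum_i\mathrm{ev}_r(x_i)=0\}$, Markov's inequality, the odd-occurrence set $S$ with $1_S\in\RM(m,m-r-1)$, and the bound in terms of $N_w$. But these steps only rewrite the $k$-th moment as a weighted count of low-weight dual codewords. The entire content of the lemma is the estimate on $N_w$, and you do not supply it. The minimum distance $2^{r+1}$ says nothing about weights $w\gg 2^{r+1}$. The assertion that only about $w/2$ points of $S$ are free is not derived from anything, since counting solutions of $\sum_{x\in S}\mathrm{ev}_r(x)=0$ is the original problem in disguise. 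Finally, Kaufman--Lovett--Porat-type results count codewords of $\RM(m,r)$ itself with large bias, which are statements of the same kind as the lemma. They do not count tiny-weight codewords of $\RM(m,m-r-1)$.

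Moreover, the estimate you aim for is false in parts of the range.

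For $r\leq 2$, the minimum-weight dual codewords are indicators of $(r+1)$-flats. So $w=2^{r+1}$, while $N_w$ is of order $2^{(r+2)m}$, which exceeds $2^{(1/2-\delta)mw}$.

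More seriously, take your choice $k\approx c\binom{m}{\leq r}r/m$. The bound for all $w\leq k$ would give $\E_f[\bias(f)^k]\leq k^{O(k)}2^{-mk/2}=2^{-(cr/2-o(1))\binom{m}{\leq r}}$ whenever $\log k=o(m)$. But $\E_f[\bias(f)^k]\geq\Pr_f\{f\text{ is constant}\}=2^{1-\binom{m}{\leq r}}$. These are incompatible as soon as $cr>2$, which happens for every fixed $r>2/c$, so no absolute $c$ works. At this $k$ the attainable exponent is of order $mk/r$, not $mk/2$, and $mk/r$ is exactly what Markov's step needs.

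One mechanism that delivers this exponent, at least for $r\leq\theta m$, is as follows.
\begin{itemize}
  \item View the $k$ points as a $k\times m$ matrix with independent uniform columns.
  \item The degree-$r$ zero-sum condition forces column $j$ to be orthogonal in $\FF_2^k$ to the evaluation vectors of all monomials of degree at most $r-1$ in coordinates $1,\ldots,j-1$.
  \item Put $m'=m-\lceil m/r\rceil$. If those evaluations on the first $m'$ coordinates span dimension at least $k/2$, each of the last $\lceil m/r\rceil$ columns costs a factor $2^{-k/2}$.
  \item It remains to show that this rank condition fails with probability only $2^{-\Omega(mk/r)}$ for $k\approx c\binom{m}{\leq r-1}$.
  \item That follows by a union bound over the fewer than $k/2$ spanning points, together with the fact that the restriction of $\RM(m',r-1)$ to any set of density at least $2^{-\theta m/r}$ has dimension at least a constant fraction of $\binom{m'}{\leq r-1}$.
\end{itemize}
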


Fix $r\geq3$ and put $\ell_n=\lceil\log_2n\rceil$. Choose fixed constants $b>\max\{1,3r/\beta_1\}$ and $0<\kappa<\min\{\beta_2/3,1/2\}$. For each $n$, set $m=\lceil b\ell_n\rceil$, $R=\binom{m}{\leq r}$, $K=\lfloor\kappa R\rfloor$, and $\eta=2^{-\beta_1m/r}\leq n^{-3}$. The condition $\kappa<1/2$ ensures that $K<R$, so the $K$-dimensional subspaces used below exist, while $\kappa<\beta_2/3$ is used to make the expected number of bad elements tend to zero. Since $r$ is fixed and $m\to\infty$, the condition $1\leq r\leq m/2$ holds for all sufficiently large $n$. We first choose the subspace.

\begin{lemma}
\label{lem:subspace}
For all sufficiently large $n$, there is a $K$-dimensional subspace $W\leq\RM(m,r)$ such that
\[
  |\bias(f)|\leq\eta
  \qquad\text{for every nonzero }f\in W.
\]
\end{lemma}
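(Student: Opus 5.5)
We argue by the first-moment method over a uniformly random $K$-dimensional subspace $W\leq\RM(m,r)$. Call $f\in\RM(m,r)$ \emph{bad} if $f\neq0$ and $|\bias(f)|>\eta$. By Lemma~\ref{lem:bhl}, which applies because $1\leq r\leq m/2$ for large $n$ and $\eta=2^{-\beta_1m/r}$, the number of bad polynomials is at most $2^{R}\cdot2^{-\beta_2R}=2^{(1-\beta_2)R}$. Here $|\RM(m,r)|=2^R$ and the uniform measure on $\RM(m,r)$ is exactly the one in the lemma.

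Next we bound the probability that a fixed nonzero $f$ lies in a uniformly random $K$-dimensional subspace $W$. By symmetry under $GL(\RM(m,r))$, this probability is the same for every nonzero $f$. It equals $(2^K-1)/(2^R-1)\leq 2^{K-R+1}$, since the expected number of nonzero vectors of $W$ is $2^K-1$ and these are spread evenly over the $2^R-1$ nonzero vectors. The expected number of bad elements in $W$ is therefore at most
\[
  2^{(1-\beta_2)R}\cdot 2^{K-R+1}=2^{K-\beta_2R+1}\leq 2^{1-(\beta_2-\kappa)R}.
\]
Because $\kappa<\beta_2/3<\beta_2$ and $R\to\infty$, this tends to zero, and in particular it is below $1$ for all sufficiently large $n$. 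Some $W$ then contains no bad element, which is exactly the claim. Such a $W$ exists as a $K$-dimensional subspace because $K<R$; this uses $\kappa<1/2$. We also need $K\geq1$, which holds for large $n$ since $R\to\infty$ (the case $K=0$ would be trivial anyway).

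Two small checks remain. The first is the inequality $\eta\leq n^{-3}$. Since $m\geq b\ell_n\geq b\log_2 n$ and $b>3r/\beta_1$, we get $\beta_1 m/r\geq 3\log_2 n$, so $\eta\leq n^{-3}$. Strictly, the lemma only needs the threshold $\eta$ as defined, so this inequality is not even required here. The second is that Lemma~\ref{lem:bhl} counts the event $|\bias(f)|>2^{-\beta_1 m/r}$ for $f=0$ as well; since $\bias(0)=1$, the zero polynomial is among the counted polynomials, but we simply exclude it. There is no real obstacle: the only step needing care is the uniform-subspace containment probability, which follows from transitivity of the general linear group on nonzero vectors.
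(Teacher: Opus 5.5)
Your proposal is correct and follows essentially the same route as the paper. You use a first-moment argument over a uniformly random $K$-dimensional subspace, bound the number of bad polynomials by $2^{(1-\beta_2)R}$ via Lemma~\ref{lem:bhl}, and use the containment probability $(2^K-1)/(2^R-1)\leq 2^{K-R+1}$ to get an expected bad count of $2^{K-\beta_2R+1}=o(1)$. Your extra checks ($K\geq1$, excluding the zero polynomial, and $\eta\leq n^{-3}$) are accurate but not needed beyond what the paper states.
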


\begin{proof}
Call a nonzero polynomial bad if its bias has absolute value greater than $\eta$. By Lemma~\ref{lem:bhl}, there are at most $2^{(1-\beta_2)R}$ bad polynomials. Choose $W$ uniformly among the $K$-dimensional subspaces of $\RM(m,r)$. For every fixed nonzero $f$, we have $\Pr\{f\in W\}=(2^K-1)/(2^R-1)\leq2^{K-R+1}$. Hence the expected number of bad elements in $W$ is at most
\[
  2^{(1-\beta_2)R}2^{K-R+1}=2^{K-\beta_2R+1}=o(1).
\]
Here we used $K\leq\kappa R$ and $\kappa<\beta_2/3$. For all sufficiently large $n$, the expectation is smaller than one, so some $W$ contains no bad polynomial.
\end{proof}

Fix such a subspace and a basis $f_1,\ldots,f_K$. Define the map
\[
  \phi:\FF_2^m\to\FF_2^K,
  \qquad
  \phi(x)=\bigl(f_1(x),\ldots,f_K(x)\bigr).
\]
We call $x\in\FF_2^m$ an evaluation point and $\phi(x)$ its evaluation vector with respect to the chosen basis of $W$. For every nonzero $a\in\FF_2^K$, the function $a\cdot\phi(x)=\sum_{k=1}^K a_kf_k(x)$ is a nonzero element of $W$, so
\[
  \left|\E_x(-1)^{a\cdot\phi(x)}\right|\leq\eta.
\]

\begin{remark}
The full space $\RM(m,r)$ does not have this property: it contains the constant-one polynomial, whose absolute bias is one, and monomials such as $x_1\cdots x_r$, whose absolute bias is $1-2^{1-r}$. Lemma~\ref{lem:subspace} removes such functions while retaining dimension $K=\Theta(m^r)$. Thus the subspace makes the Fourier proof of Lemma~\ref{lem:girth} direct; the later rank argument only uses the fact that the coordinate functions of $\phi$ have degree at most $r$.
\end{remark}

\paragraph{$L$-wise linearly independent evaluation vectors}

Set $L=\left\lfloor\frac{K}{4\ell_n}\right\rfloor$ and $D_\star=\left\lfloor\frac L2\right\rfloor=\Theta((\log n)^{r-1})$. We next choose the evaluation points so that their evaluation vectors have no short linear dependence.

\begin{lemma}
\label{lem:girth}
There are points $x_1,\ldots,x_n\in\FF_2^m$ such that every subset of at most $L$ vectors among
\[
  v_i:=\phi(x_i)\in\FF_2^K,
  \qquad i\in[n],
\]
is linearly independent.
\end{lemma}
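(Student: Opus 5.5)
We prove the statement by the probabilistic method. Choose $x_1,\ldots,x_n$ independently and uniformly from $\FF_2^m$, and set $v_i=\phi(x_i)$. A subset of at most $L$ vectors is linearly dependent exactly when there is a nonempty $S\subseteq[n]$ with $|S|\leq L$ and $\sum_{i\in S}v_i=0$ (over $\FF_2$ every nontrivial dependence has coefficients in $\{0,1\}$). So it suffices to show that the expected number of such sets $S$ is less than one. A repeated point $x_i=x_j$ gives a dependence with $|S|=2$, but that case is already covered by the same count.

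For a fixed $S$ with $|S|=t$, we use Fourier expansion over $\FF_2^K$:
\[
  \Pr\Bigl\{\sum_{i\in S}v_i=0\Bigr\}
  =2^{-K}\sum_{a\in\FF_2^K}\prod_{i\in S}\E_{x_i}(-1)^{a\cdot\phi(x_i)}
  =2^{-K}\sum_{a}\Bigl(\E_x(-1)^{a\cdot\phi(x)}\Bigr)^{t}.
\]
The term $a=0$ contributes $2^{-K}$. By the bias bound following Lemma~\ref{lem:subspace}, each nonzero $a$ contributes at most $\eta^t$ in absolute value. Hence the probability is at most $2^{-K}+\eta^t\leq 2^{-K}+n^{-3t}$.

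Summing over $t$, the number of sets of size $t$ is at most $n^t$, so the expected number of bad sets is at most
\[
  \sum_{t=1}^{L}n^t\bigl(2^{-K}+n^{-3t}\bigr)\leq L\,n^{L}2^{-K}+\sum_{t\geq1}n^{-2t}.
\]
The second sum is $O(n^{-2})$. For the first term, $n^L\leq 2^{\ell_n L}\leq 2^{K/4}$ because $L\leq K/(4\ell_n)$. So the first term is at most $L\,2^{-3K/4}$, which tends to zero since $L\leq K$ and $K\to\infty$. The total expectation is therefore $o(1)$, hence below one for large $n$, and some choice of points has no dependence of size at most $L$.

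The main point needing care is the Fourier step. We must confirm that the bias bound applies to every nonzero $a$; it does, because the $f_k$ form a basis of $W$, so $a\cdot\phi$ is a nonzero element of $W$. We also need the crude $a=0$ term $2^{-K}$ to be beaten by the choice $L=\lfloor K/(4\ell_n)\rfloor$, and the calculation above shows it is. Everything else is a routine union bound.
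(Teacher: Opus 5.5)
Your proposal is correct and follows essentially the same route as the paper: independent uniform points, Fourier inversion with every nonzero character bounded by $\eta^w$ via the bias property of $W$, and a union bound using $\binom{n}{w}\leq n^w$, $L\ell_n\leq K/4$, and $\eta\leq n^{-3}$. The only cosmetic difference is that you bound the $\eta$-terms by the geometric series $\sum_{t\geq1}n^{-2t}$, whereas the paper writes them as $(1+\eta)^n-1\leq e^{n\eta}-1$.
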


\begin{proof}
Sample $x_1,\ldots,x_n$ independently and uniformly from $\FF_2^m$. For a fixed nonempty set $I\subseteq[n]$ of size $w\leq L$, Fourier inversion and independence give
\begin{align*}
  \Pr\left\{\sum_{i\in I}v_i=0\right\}
  &=2^{-K}\sum_{a\in\FF_2^K}
    \E(-1)^{a\cdot\sum_{i\in I}\phi(x_i)}\\
  &=2^{-K}\sum_{a\in\FF_2^K}
    \left(\E_x(-1)^{a\cdot\phi(x)}\right)^w\\
  &\leq2^{-K}+2^{-K}\sum_{a\neq0}
    \left|\E_x(-1)^{a\cdot\phi(x)}\right|^w\\
  &\leq2^{-K}+\eta^w.
\end{align*}
The last inequality bounds every nonzero Fourier term in absolute value. A union bound gives
\[
  \Pr\left\{\exists\,\varnothing\neq I\subseteq[n],\ |I|\leq L:\sum_{i\in I}v_i=0\right\}
  \leq L2^{L\ell_n-K}+(1+\eta)^n-1
  \leq L2^{-3K/4}+e^{n\eta}-1=o(1).
\]
Here we used $\binom nw\leq n^w$, $L\ell_n\leq K/4$, and $n\eta\leq n^{-2}$. Thus a choice with no zero sum of size at most $L$ exists. Over $\FF_2$, every nontrivial linear dependence is a zero sum over its nonempty support, so every subset of at most $L$ chosen vectors is linearly independent. Finally, since $r$ is fixed, $R=m^r/r!+O(m^{r-1})$, and hence
\[
  D_\star=\left(\frac{\kappa b^r}{8r!}+o(1)\right)
  (\log_2n)^{r-1}.
\]
\end{proof}

\begin{remark}\label{remark-girth}
With respect to the chosen basis of $W$, the columns $\{\phi(x):x\in\FF_2^m\}$ form a generator matrix for the length-$2^m$ evaluation code of $W$. Choosing $x_1,\ldots,x_n$ amounts to puncturing this code to $n$ coordinates. Keeping all coordinates would leave short dependencies: for every affine subspace $H\subseteq\FF_2^m$ of dimension $r+1$, we have $\sum_{x\in H}\phi(x)=0$, because every polynomial of degree at most $r$ sums to zero over $H$. Lemma~\ref{lem:girth} chooses a puncturing whose column matroid has girth greater than $L$; equivalently, the dual of the punctured code has minimum distance greater than $L$. This is exactly what Proposition~\ref{prop:marginals} uses, since $t\leq D_\star$ edges involve at most $2t\leq L$ endpoint vectors. The same general principle appears in \cite{BlasiokEtAl2026}, where an even cover is a nonempty set of hyperedges in which every vertex occurs an even number of times, equivalently a nonzero $\FF_2$-dependence among their incidence vectors. Expansion rules out small even covers and thereby gives local uniformity. Their construction uses coset graphs and Reed--Muller decoding, whereas ours uses a punctured evaluation matrix, a random alternating form, and a rank test.
\end{remark}

Choose $n_0$ sufficiently large that the preceding construction applies for every $n\geq n_0$, and fix one such list of evaluation points and evaluation vectors for each $n\geq n_0$.

\subsection{Construction and low-degree advantage}
In this subsection, we use the evaluation vectors and a random alternating form to define the planted distribution and prove that its statistics of degree at most $D_\star$ agree with the null distribution.

\paragraph{Construction of the planted distributions}

Let $\Alt(K)$ denote the vector space of bilinear maps $B:\FF_2^K\times\FF_2^K\to\FF_2$ satisfying $B(u,u)=0$ for every $u\in\FF_2^K$. Such forms are called alternating. In characteristic two, every alternating form is symmetric, because $0=B(u+v,u+v)=B(u,v)+B(v,u)$. If $e_1,\ldots,e_K$ is the standard basis, a uniform $B\in\Alt(K)$ is obtained by choosing the values $B(e_a,e_b)$ for $1\leq a<b\leq K$ as independent uniform bits; symmetry and the identities $B(e_a,e_a)=0$ determine all remaining values.

Fix $n\geq n_0$. Sample $B$ uniformly from $\Alt(K)$ and independently sample a uniform permutation $\pi$ of $[n]$. Define a random graph $X=(X_e)_{e\in\sE_n}$ by
\[
  X_{\{\pi(i),\pi(j)\}}:=B(v_i,v_j),
  \qquad 1\leq i<j\leq n.
\]
The symmetry of $B$ makes this definition independent of the order of the endpoints, and alternation gives zero diagonal if the edge array is extended to an adjacency matrix. Let $\PP_n$ be the distribution of $X$. For every fixed vertex relabeling $\tau$, the graph $\tau X$ is obtained by replacing $\pi$ with $\tau\circ\pi$. Since this composition is again uniform, $\PP_n$ is invariant under every vertex relabeling.

For the finitely many $n<n_0$, set $\PP_n:=\QQ_n$. In the rest of this section, we work with $n\geq n_0$.

\paragraph{Matrix view}

For an equivalent matrix description, let $M_B\in\FF_2^{K\times K}$ be the matrix of $B$ in the standard basis, so that $B(u,v)=u^\top M_Bv$, and place the evaluation vectors into the columns of $V:=\begin{bmatrix}v_1&\cdots&v_n\end{bmatrix}\in\FF_2^{K\times n}$. The matrix $M_B$ is symmetric with zero diagonal, and its entries above the diagonal are independent uniform bits. Before relabeling, the adjacency matrix is $A^{(0)}=V^\top M_BV$. If $P_\pi e_i=e_{\pi(i)}$ is the permutation matrix of $\pi$, then the observed adjacency matrix is
\[
  A=P_\pi A^{(0)}P_\pi^\top
   =P_\pi V^\top M_BVP_\pi^\top.
\]
In particular, $A_{\pi(i),\pi(j)}=B(v_i,v_j)$, in agreement with the definition above.

\paragraph{Zero low-degree advantage} We now show that the edges in every small set are independent and uniform.

\begin{proposition}
\label{prop:marginals}
For every $t\leq D_\star$, every marginal of $\PP_n$ on $t$ distinct edges is uniform on $\FF_2^t$. Thus $\PP_n$ is exactly $D_\star$-wise uniform, equivalently its marginals on at most $D_\star$ edge coordinates agree with $\QQ_n$. The same statement holds for $\Tnoise_\eps \PP_n$ for every $\eps\in[0,1]$.
\end{proposition}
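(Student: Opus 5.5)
We condition on the permutation $\pi$ and then average over it. Fix $t\leq D_\star$ distinct edges $e_1,\ldots,e_t\in\sE_n$. Since $\pi$ is independent of $B$, it suffices to prove that, for every fixed $\pi$, the vector $(X_{e_1},\ldots,X_{e_t})$ is uniform on $\FF_2^t$ over the randomness of $B$ alone. Write $e_s=\{\pi(i_s),\pi(j_s)\}$ with $i_s<j_s$. Then $X_{e_s}=B(v_{i_s},v_{j_s})$, and the pairs $\{i_s,j_s\}$ are distinct because $\pi$ is a bijection. Let $S=\bigcup_s\{i_s,j_s\}$. Then $|S|\leq2t\leq2D_\star\leq L$, so Lemma~\ref{lem:girth} shows that $\{v_i:i\in S\}$ is linearly independent.

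The key step is a change of basis. Extend $\{v_i\}_{i\in S}$ to a basis $u_1,\ldots,u_K$ of $\FF_2^K$, and let $g\in GL_K(\FF_2)$ send $e_a\mapsto u_a$. The map $B\mapsto B\circ(g\times g)$ is a linear bijection of $\Alt(K)$ onto itself, because alternation is preserved. It therefore preserves the uniform distribution. Hence the values $B(u_a,u_b)$ for $a<b$ are independent uniform bits, exactly as the values $B(e_a,e_b)$ are. Each $X_{e_s}$ equals $B(u_a,u_b)$ for a distinct pair $a\neq b$ of these basis vectors, so the $t$ edge values are distinct coordinates of a uniform alternating form and are therefore i.i.d.\ uniform. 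Averaging over $\pi$ gives uniformity of the marginal. Equivalently, every nonempty parity $\sum_{s\in T}X_{e_s}$ is the value of $B$ on a nonzero element $\sum_{s\in T}v_{i_s}\wedge v_{j_s}$ of $\Lambda^2$, which is nonzero by independence, and a nonzero linear functional of a uniform $B$ is an unbiased bit. This Fourier formulation avoids basis extension and is the version I would actually write.

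For the consequences, $D_\star$-wise uniformity means every monomial $\prod_{e\in F}(2X_e-1)$ with $1\leq|F|\leq D_\star$ has $\PP_n$-mean zero. Such monomials form an orthonormal basis of the degree-$\leq D_\star$ polynomials in $L^2(\QQ_n)$, so $\sL_n^{\leq D_\star}=1$. For $\Tnoise_\eps\PP_n$, note that a marginal on a set $F$ of edges of the noised sample is obtained by applying independent coordinatewise resampling to the $F$-marginal of $\PP_n$. That marginal is uniform, and uniform noise maps the uniform distribution to itself, so the noised marginal is uniform too. Finally, for $n<n_0$ we have $\PP_n=\QQ_n$ and everything is trivial.

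The only delicate point is checking that distinct edges give distinct unordered index pairs, so that the wedges $v_{i_s}\wedge v_{j_s}$ are linearly independent in $\Lambda^2\FF_2^K$. This holds because the vectors $v_i$, $i\in S$, are independent, so the wedges $v_i\wedge v_j$ over distinct pairs $\{i,j\}\subseteq S$ are part of a basis of $\Lambda^2$. I expect no real obstacle beyond stating this cleanly.
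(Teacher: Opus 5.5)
Your proof is correct and follows the paper's argument: condition on $\pi$, use Lemma~\ref{lem:girth} to make the at most $2t\leq L$ endpoint vectors linearly independent, show that $B$ evaluated on distinct pairs of independent vectors gives i.i.d.\ uniform bits, and handle resampling coordinatewise. The only difference is cosmetic. The paper gets uniformity from surjectivity of the restriction map $\Alt(K)\to\Alt(U)$, while you use the $GL_K(\FF_2)$ change of basis, or the duality $\Alt(K)\cong(\Lambda^2\FF_2^K)^*$ with a parity argument; these are equivalent forms of the same linear-algebra fact.
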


\begin{proof}
Fix $t\leq D_\star$ distinct observed edges and condition on $\pi$. An observed vertex $u$ carries the evaluation vector $v_{\pi^{-1}(u)}$. Since every pair among $v_1,\ldots,v_n$ is linearly independent, these vectors are pairwise distinct. Thus distinct observed vertices carry distinct evaluation vectors, and the selected distinct edges give distinct unordered pairs of vectors. Let $w_1,\ldots,w_h$, where $h\leq2t\leq L$, be the distinct evaluation vectors carried by the endpoints of the selected edges. They are linearly independent by Lemma~\ref{lem:girth}; put $U=\operatorname{span}\{w_1,\ldots,w_h\}$. The restriction map from $\Alt(K)$ to the space $\Alt(U)$ of alternating bilinear forms on $U$ is surjective. Indeed, given any $C\in\Alt(U)$, extend $w_1,\ldots,w_h$ to a basis of $\FF_2^K$, define $B$ to agree with $C$ on $U\times U$, set all pairings involving an added basis vector to zero, and extend bilinearly. Then $B\in\Alt(K)$ and $B|_{U\times U}=C$. Since restriction is a surjective linear map between finite vector spaces, every element of $\Alt(U)$ has the same number of extensions in $\Alt(K)$; hence a uniform $B\in\Alt(K)$ restricts to a uniform element of $\Alt(U)$. In the basis $w_1,\ldots,w_h$, the values $B(w_p,w_q)$ for $1\leq p<q\leq h$ are therefore independent uniform bits. The edge indicators under consideration are therefore independent uniform bits; averaging over $\pi$ preserves this conclusion.

For the statement after resampling, condition on the set of edges that are replaced. Among the selected coordinates, the unreplaced ones form a marginal of the uniform vector just obtained, while the replaced ones are independent uniform bits.
\end{proof}

This uniformity gives the low-degree conclusion.

\begin{corollary}
\label{cor:adv}
Let $\sL_n=\frac{\mathrm d\PP_n}{\mathrm d\QQ_n}$. For every integer $0\leq d\leq D_\star$,
\[
  \Adv_{\leq d}(\PP_n,\QQ_n)=0.
\]
Therefore, $\sL_n^{\leq d}=1$ and $\|\sL_n^{\leq d}\|_{L^2(\QQ_n)}=1$.
\end{corollary}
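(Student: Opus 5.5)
The plan is to deduce the corollary directly from Proposition~\ref{prop:marginals} through the variational formula in Equation~\eqref{eq:lda}. The case $d=0$ holds by the separate definition $\Adv_{\leq0}:=0$. For the identity $\sL_n^{\leq 0}=1$ we still need a short check: the projection onto constants is $\E_{\QQ_n}\sL_n=1$.

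For $1\leq d\leq D_\star$, every real multilinear polynomial $f$ of degree at most $d$ is a linear combination of monomials $\prod_{e\in S}x_e$ with $|S|\leq d\leq D_\star$. The expectation of each such monomial depends only on the marginal of the distribution on the coordinates in $S$. By Proposition~\ref{prop:marginals}, this marginal under $\PP_n$ is uniform on $\FF_2^{|S|}$, so it coincides with the marginal under $\QQ_n$. Linearity of expectation then gives $\E_{\PP_n}f=\E_{\QQ_n}f$ for every such $f$. Hence every ratio in the supremum of Equation~\eqref{eq:lda} is zero, and $\Adv_{\leq d}(\PP_n,\QQ_n)=0$.

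It remains to identify $\sL_n^{\leq d}$. For any polynomial $g$ of degree at most $d$,
\[
  \langle \sL_n,g\rangle_{L^2(\QQ_n)}=\E_{\PP_n}g=\E_{\QQ_n}g=\langle 1,g\rangle_{L^2(\QQ_n)}.
\]
The constant function $1$ lies in the degree-$\leq d$ subspace. So $\sL_n-1$ is orthogonal to that subspace, and therefore $\sL_n^{\leq d}=1$. This gives $\|\sL_n^{\leq d}\|_{L^2(\QQ_n)}=1$, which agrees with the norm identity $\|\sL^{\leq d}\|^2=1+\Adv_{\leq d}^2$.

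There is no real obstacle here. The only point needing care is that a monomial's expectation is determined by the marginal on its own support; this is immediate because multilinear monomials in the edge indicators depend only on those coordinates.
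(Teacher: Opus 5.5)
Your proposal is correct and follows essentially the same route as the paper, which states the corollary as an immediate consequence of Proposition~\ref{prop:marginals}: agreement of all marginals on at most $D_\star$ edges forces $\E_{\PP_n}f=\E_{\QQ_n}f$ for every polynomial $f$ of degree at most $d$. Your orthogonality check that $\sL_n-1$ is orthogonal to the degree-$\leq d$ subspace, including the $d=0$ case, spells out what the paper leaves implicit through the identity $\|\sL^{\leq d}\|^2=1+\Adv_{\leq d}^2$.
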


\section{Noise-robust polynomial-time distinguisher}
\label{sec:distinguisher}

In this section, we define the rank test and prove that it distinguishes $\Tnoise_\eps\PP_n$ from $\QQ_n$ in polynomial time.

\subsection{The rank test}

Fix rational constants $a,\sigma>0$ and $\eps\in(0,1)$. Set $T=\lfloor a\ell_n\rfloor$ and $s=\lfloor\sigma\ell_n\rfloor$. For the finitely many $n$ with $T\geq n$, define the test to return ``null.'' Henceforth assume $T<n$ and set $N_0=n-T$. For a graph $Y\in\{0,1\}^{\sE_n}$, write $Y_{ij}:=Y_{\{i,j\}}$. Use vertices $1,\ldots,T$ as anchors. For every nonanchor $j>T$, form the vector of its edges to the anchors,
\[
  y_j=(Y_{1j},\ldots,Y_{Tj})\in\FF_2^T.
\]
Define the monomial lift
\[
  \Psi_s(y)=\left(\prod_{i\in S}y_i\right)_{S\subseteq[T],\ |S|\leq s}
  \in\FF_2^{R_0},
  \qquad
  R_0=\binom{T}{\leq s}.
\]
The subsets are placed in a fixed order, and the empty product is one. Let $M_s(Y)$ be the $R_0\times N_0$ matrix with columns $\Psi_s(y_j)$. Its column-rank deficit is $N_0-\rank M_s(Y)$.

Define
\[
  \lambda=-\log_2(1-\eps),
  \qquad
  \mu_n=N_0(1-\eps)^T
       =n^{1-a\lambda+o(1)},
  \qquad
  \tau_n=\left\lceil\frac{\mu_n}{2}\right\rceil.
\]
The test declares ``planted'' if and only if
\[
  N_0-\rank M_s(Y)\geq\tau_n.
\]

Algorithm~\ref{alg:rank-test} gives pseudocode for the rank test; all matrix operations are over $\FF_2$.

\begin{algorithm}[htbp]
\DontPrintSemicolon
\LinesNumbered
\caption{Rank test}\label{alg:rank-test}
\KwIn{An observed graph $Y$ on vertex set $[n]$.}
\KwOut{A label in $\{\textsc{Planted},\textsc{Null}\}$.}

Set $T\gets\lfloor a\ell_n\rfloor$\;

\If{$T\geq n$}{
    \Return \textsc{Null}\;
}

Set $s\gets\lfloor\sigma\ell_n\rfloor$ and $N_0\gets n-T$\;

\For{$j=T+1,T+2,\dots,n$}{
    Set $y_j\gets(Y_{1j},\ldots,Y_{Tj})\in\FF_2^T$\;

    \ForEach{$S\subseteq[T]$ with $|S|\leq s$}{
        Set $(\Psi_s(y_j))_S\gets\prod_{i\in S}(y_j)_i$\;
    }
}

Set $M_s(Y)\gets[\Psi_s(y_{T+1})\ \cdots\ \Psi_s(y_n)]$\;

Set $\tau_n\gets\left\lceil N_0(1-\eps)^T/2\right\rceil$\;

Compute $\rho_Y\gets\rank M_s(Y)$ by Gaussian elimination\;

\eIf{$N_0-\rho_Y\geq\tau_n$}{
    \Return \textsc{Planted}\;
}{
    \Return \textsc{Null}\;
}
\end{algorithm}

\subsection{Full rank under the null distribution}

The analysis under $\QQ_n$ uses the following result on random puncturings of Reed--Muller codes.

\begin{lemma}\cite[Theorem 1.1 and Corollary 4.1]{BHSS2022}
\label{lem:bhss}
There is an absolute constant $\gamma_0>0$ with the following property. Let $T\to\infty$, let $s=s(T)\to\infty$ be integers satisfying $s<\gamma_0T$, and fix $\delta_0\in(0,1)$. If $Z$ is a uniformly random subset of $\FF_2^T$ of size $\lfloor(1-\delta_0)\binom{T}{\leq s}\rfloor$, then the vectors $\{\Psi_s(z):z\in Z\}$ are linearly independent with probability $1-o(1)$.
\end{lemma}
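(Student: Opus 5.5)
Since this lemma is quoted from \cite{BHSS2022}, the plan is mainly to reduce it to a statement about codewords of Reed--Muller codes and then indicate how that statement is proved there. The reduction is by duality. The vectors $\{\Psi_s(z):z\in Z\}$ are linearly dependent exactly when there is a nonempty set $S\subseteq Z$ with $\sum_{z\in S}\Psi_s(z)=0$. Equivalently, every polynomial of degree at most $s$ sums to zero over $S$, which means the indicator of $S$ is a nonzero codeword of the dual code $\RM(T,s)^\perp=\RM(T,T-s-1)$. So the lemma says that a uniformly random set $Z$ of size $(1-\delta_0)R_0$, where $R_0=\binom{T}{\leq s}$, contains the support of no nonzero codeword of $\RM(T,T-s-1)$. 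In the language of the paper, $Z$ is a set of evaluation points with no linear dependence at all, which is the extreme case of Lemma~\ref{lem:girth}.

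I would first try a first-moment bound. For each weight $w\leq|Z|$, let $A_w$ be the number of weight-$w$ codewords of $\RM(T,T-s-1)$. The expected number of such codewords supported in $Z$ is
\[
  \sum_{w}A_w\,\frac{\binom{|Z|}{w}}{\binom{2^T}{w}}
  \leq\sum_w A_w\left(\frac{|Z|}{2^T}\right)^w .
\]
Since $s<\gamma_0T$, the density $p:=|Z|/2^T\leq 2^{-(1-h_2(\gamma_0))T}$ is exponentially small. The sum then splits into two regimes.

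\emph{Small weights.} Here $w$ lies between $2^{s+1}$ and a few times $2^{s+1}$. The minimum distance of the dual code is $2^{s+1}$, and the Kaufman--Lovett--Porat weight-distribution bounds, in the form sharpened by Sberlo--Shpilka, give $A_w\leq 2^{O(T^{s+1}\cdot\mathrm{poly})}$-type estimates. These are dominated by $p^{w}=2^{-\Omega(Tw)}$.

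\emph{Weights near $|Z|$.} This is where the crude union bound fails. The trivial count $A_w\leq 2^{2^T-R_0}$ is far too large, and it is exactly the slack $\delta_0R_0$ between $|Z|$ and $R_0$ that must be exploited.

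For this large-weight regime I would switch to an incremental argument instead. Reveal the points of $Z$ one at a time. After $k<|Z|$ points $Z_k$ have been revealed, assume the columns are still independent. Then the polynomials of degree at most $s$ that vanish on $Z_k$ form a subspace $V_k$ of codimension $k$. The next uniform point $z$ creates a dependence only if every $g\in V_k$ vanishes at $z$, that is, only if $z$ lies in the common zero set of $V_k$. So it suffices to show that, for a subspace of $\RM(T,s)$ of dimension at least $\delta_0R_0$, the common zero set has density $o(1/R_0)$, uniformly over the relevant subspaces, or at least with high probability over the random process. Summing these failure probabilities over the $|Z|<R_0$ steps then finishes the proof.

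Bounding the common zero set is the step I expect to be the main obstacle. The approach I would try first is to note that a large-dimensional subspace of $\RM(T,s)$ must contain many polynomials of nearly full weight. I would get this by combining the weight-distribution bounds above with a counting argument: the polynomials of weight below $(1/2-\theta)2^T$ number at most $2^{(1-c)R_0}$, so a subspace of dimension $\delta_0R_0$ contains polynomials whose zero sets intersect in a set of tiny density. The restriction $s<\gamma_0T$ enters precisely here, to keep these weight bounds effective. This is the content of \cite[Theorem~1.1]{BHSS2022}, and Corollary~4.1 there converts it into the stated form for uniformly random subsets of the prescribed size.
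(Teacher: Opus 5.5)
The paper does not prove this lemma; it imports it from \cite{BHSS2022}. Your closing appeal to that theorem is therefore exactly what the paper does. Your two reformulations are also correct. A dependence among the $\Psi_s(z)$ is a nonzero codeword of $\RM(T,T-s-1)$ supported in $Z$. In the sequential view, $z_{k+1}$ creates a dependence iff it lies in the common zero set $\mathcal Z(V_k)$ of the degree-$\leq s$ polynomials vanishing on $Z_k$.

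However, the sequential view is an exact restatement of the lemma: the failure probability is precisely $\sum_k\Pr[Z_k\text{ independent},\ z_{k+1}\in\mathcal Z(V_k)]$. So all the content sits in the step you flag, and the mechanism you propose for it would not work.

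First, the required precision is exponentially finer than weight bounds give. For any subspace $V$, averaging $(-1)^{g(x)}$ over $g\in V$ gives $|\mathcal Z(V)|/2^T=\E_{g\in V}\bias(g)$. You therefore need the \emph{average} bias over $V$ to be $o(1/R_0)=2^{-(h_2(s/T)+o(1))T}$. If all but a fraction $q$ of $V$ has weight at least $(1/2-\theta)2^T$, you only get density at most $2\theta+q$. Tail bounds like Lemma~\ref{lem:bhl} only reach $\theta=2^{-\Theta(T/s)}$, which is a constant in the regime $s/T\to\sigma/a$ used in the paper. Second, a global count of at most $2^{(1-c)R_0}$ low-weight polynomials says nothing about which elements a particular subspace of dimension $\delta_0R_0<(1-c)R_0$ contains.

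In fact, no argument that uses only $\dim V_k\geq\delta_0R_0$ can succeed. The same holds even if you also use that $V_k$ is the vanishing space of a set with independent columns. Let $H=\{x:x_1=\cdots=x_c=0\}$, and let $Z_k$ be the Hamming ball of radius $s$ inside $H$. On $Z_k$, the monomials not involving $x_1,\ldots,x_c$ give a unitriangular evaluation matrix. Hence the columns $\Psi_s(z)$, $z\in Z_k$, are independent, and every polynomial of degree at most $s$ vanishing on $Z_k$ vanishes on all of $H$. When $s/T$ is bounded below, a suitable constant $c$ gives $k=\binom{T-c}{\leq s}\leq(1-\delta_0)R_0$. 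Yet $\mathcal Z(V_k)=H$ has constant density $2^{-c}$, even though $V_k$ contains perfectly balanced polynomials such as $x_1$.

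So the real task is to show that a \emph{random} $Z_k$ is unlikely to have a structured closure of this kind. That is the substance of \cite[Theorem~1.1]{BHSS2022}. Your sketch supplies no mechanism for it, so as written the proposal establishes the lemma only through the citation.

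Separately, your small-weight estimate is wrong as stated, since $2^{O(T^{s+1})}$ is not beaten by $2^{-\Omega(T2^{s+1})}$. The minimum-weight codewords of $\RM(T,T-s-1)$ are indicators of $(s+1)$-flats, of which there are only $2^{O(sT)}$. This part becomes unnecessary once you adopt the sequential argument.
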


\begin{remark}
The random puncturing in Lemma~\ref{lem:bhss} has a different role from the puncturing used to define $\PP_n$. In Section~\ref{sec:distributions}, the evaluation points are fixed once and give column girth greater than $L$. Under $\QQ_n$, the random vectors $y_j$ instead select random columns from the generator matrix of $\RM(T,s)$, and Lemma~\ref{lem:bhss} gives full column rank.
\end{remark}

We apply this result to the vectors $y_j$ under $\QQ_n$.

\begin{lemma}
\label{lem:null}
Suppose $a>2$, $\sigma/a<\min\{\gamma_0,1/2\}$, and $a h_2(\sigma/a)>1$. If $Y\sim \QQ_n$, then
\[
  \rank M_s(Y)=N_0
\]
with probability $1-o(1)$.
\end{lemma}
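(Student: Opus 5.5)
Under $\QQ_n$ the anchor vectors $y_{T+1},\ldots,y_n$ are independent and uniform on $\FF_2^T$, because they read disjoint sets of edge coordinates. The plan is to reduce to Lemma~\ref{lem:bhss}. That lemma concerns a uniformly random \emph{set} of prescribed size, whereas our vectors are an i.i.d.\ sample with possible repetitions, so two gaps must be closed. First, the vectors must be pairwise distinct with high probability; otherwise two columns coincide and the rank drops. Second, their count $N_0$ must fit under $(1-\delta_0)R_0$.

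\textbf{Step 1 (no collisions).} By a union bound, the probability that $y_j=y_k$ for some $j\neq k$ is at most $\binom{N_0}{2}2^{-T}\leq n^2 2^{-a\ell_n+1}$. Since $T=\lfloor a\ell_n\rfloor\geq a\log_2 n-1$ and $a>2$, this is $O(n^{2-a})=o(1)$. Conditioned on distinctness, the set $\{y_j\}$ is a uniformly random $N_0$-subset of $\FF_2^T$, because the i.i.d.\ uniform law is exchangeable over injective tuples.

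\textbf{Step 2 (size comparison).} Since $\sigma/a<1/2$, we have $s\leq T/2$. The standard binomial estimate $\binom{T}{s}\geq 2^{T h_2(s/T)}/(T+1)$ then gives $R_0\geq 2^{T h_2(s/T)-O(\log T)}$. Here $s/T\to\sigma/a$, and $h_2$ is continuous, so $T h_2(s/T)=(a h_2(\sigma/a)+o(1))\ell_n$. Because $a h_2(\sigma/a)>1$, we get $R_0\geq n^{1+c}$ for some $c>0$, and hence $N_0\leq n\leq \lfloor R_0/2\rfloor$ for large $n$. Also $s<\gamma_0 T$ for large $n$ since $\sigma/a<\gamma_0$, and $s,T\to\infty$. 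So Lemma~\ref{lem:bhss} applies with $\delta_0=1/2$.

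\textbf{Step 3 (monotonicity).} Let $Z$ be a uniform random set of size $N_1=\lfloor R_0/2\rfloor$, and let $Z'$ be a uniformly random $N_0$-subset of $Z$. Then $Z'$ is a uniform $N_0$-subset of $\FF_2^T$. Any subset of a linearly independent family is linearly independent. Hence the lifted vectors of a uniform $N_0$-set are independent with probability at least $1-o(1)$.

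Combining the steps, $\rank M_s(Y)=N_0$ except on the union of the collision event and the failure event of Lemma~\ref{lem:bhss}, both of probability $o(1)$. The main point needing care is the entropy estimate in Step 2: we must check that the floors in $s$ and $T$ and the polynomial loss in $\binom{T}{\leq s}$ do not eat the margin $a h_2(\sigma/a)-1>0$. They do not, since the losses are $O(\log\log n)$ in the exponent while the margin is linear in $\ell_n$.
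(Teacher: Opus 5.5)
Your proposal is correct and follows the paper's proof essentially step for step: the same collision union bound, the same reduction to a uniformly random $N_0$-subset, the same estimate $R_0=n^{a h_2(\sigma/a)+o(1)}\gg n$, and Lemma~\ref{lem:bhss} with $\delta_0=1/2$ combined with the fact that subsets of independent families are independent. The only difference is cosmetic: you couple by taking a uniform $N_0$-subset of a uniform $N_1$-set, while the paper extends the observed set to a uniform $N_1$-superset, and these describe the same joint law.
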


\begin{proof}
Under $\QQ_n$, the vectors $y_j$ are independent and uniform in $\FF_2^T$. By a union bound, the probability that two of them are equal is at most $\binom{N_0}{2}2^{-T}=n^{2-a+o(1)}=o(1)$. Conditional on all the vectors being distinct, the set $U:=\{y_{T+1},\ldots,y_n\}$ is a uniformly random $N_0$-subset of $\FF_2^T$.

Since $\sigma/a<1/2$,
\[
  R_0=\binom{T}{\leq s}
     =n^{a h_2(\sigma/a)+o(1)}\gg n.
\]
Thus $N_0\leq N_1:=\lfloor R_0/2\rfloor$ for all sufficiently large $n$. Given $U$, choose $Z$ uniformly among the $N_1$-subsets of $\FF_2^T$ containing $U$. Every fixed $N_1$-subset contains exactly $\binom{N_1}{N_0}$ possible sets $U$, and every $U$ has the same number of $N_1$-supersets. Hence the marginal distribution of $Z$ is uniform among all $N_1$-subsets of $\FF_2^T$. Since $T,s\to\infty$ and $s/T=\sigma/a+o(1)<\gamma_0$, Lemma~\ref{lem:bhss} with $\delta_0=1/2$ shows that the columns indexed by $Z$ are linearly independent with probability $1-o(1)$. The original columns form a subset of this family and are therefore also independent. Since all the vectors $y_j$ are distinct with probability $1-o(1)$, the result follows.
\end{proof}

\subsection{Rank under the noisy planted distribution}

We next compute the matrix rank under $\Tnoise_\eps\PP_n$.

\begin{lemma}
\label{lem:planted}
Suppose $r\sigma/b<1/2$ and $b h_2(r\sigma/b)<1-a\lambda$. If $Y\sim\Tnoise_\eps \PP_n$, then
\[
  N_0-\rank M_s(Y)\geq\tau_n
\]
with probability $1-o(1)$.
\end{lemma}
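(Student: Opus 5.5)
Let $G\subseteq\{T+1,\ldots,n\}$ be the set of \emph{good} nonanchors, meaning those none of whose $T$ anchor edges was resampled. I will show that the columns indexed by $G$ span a space of small dimension. Then I bound the rank by that dimension plus the number of remaining columns.

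\textbf{Step 1 (many good vertices).} The resampling is independent of $B$ and $\pi$, and the anchor edge sets of distinct nonanchors are disjoint. Hence $|G|\sim\Bin(N_0,(1-\eps)^T)$, with mean $\mu_n$. If $a\lambda<1$, then $\mu_n=n^{1-a\lambda+o(1)}\to\infty$. The hypothesis $bh_2(r\sigma/b)<1-a\lambda$ forces $1-a\lambda>0$, so this holds. Chebyshev, or a Chernoff bound, then gives $|G|\geq(1-\delta)\mu_n$ with probability $1-o(1)$ for any fixed $\delta>0$; I will take $\delta=1/4$.

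\textbf{Step 2 (good columns lie in a small space).} Condition on $B$ and $\pi$. Write $\mathbf u_i=\pi^{-1}(i)$ for the latent index of observed vertex $i$. For a good $j$ and an anchor $i\leq T$,
\[
  Y_{ij}=B(v_{\mathbf u_i},v_{\mathbf u_j})=\sum_k c^{(i)}_k f_k(x_{\mathbf u_j}),
\]
where $c^{(i)}=M_Bv_{\mathbf u_i}$ is fixed once $B$ and $\pi$ are fixed. So $(y_j)_i=g_i(x_{\mathbf u_j})$ with $g_i\in\RM(m,r)$, and each $g_i$ is independent of $j$. Each coordinate $\prod_{i\in S}(y_j)_i$ with $|S|\leq s$ is therefore $h_S(x_{\mathbf u_j})$ with $h_S\in\RM(m,rs)$. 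Fix a basis of $\RM(m,rs)$ and write each $h_S$ in it. Then every good column $\Psi_s(y_j)$ equals $C\,\mathrm{ev}(x_{\mathbf u_j})$ for a fixed $R_0\times\binom{m}{\leq rs}$ matrix $C$, where $\mathrm{ev}(x)$ is the vector of monomials of degree at most $rs$. Hence the good columns span a space of dimension at most $\binom{m}{\leq rs}$. We have $rs/m\to r\sigma/b<1/2$, so
\[
  \binom{m}{\leq rs}\leq 2^{m h_2(rs/m)}=n^{b h_2(r\sigma/b)+o(1)}=o(\mu_n),
\]
using $bh_2(r\sigma/b)<1-a\lambda$.

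\textbf{Step 3 (conclude).} Each of the $N_0-|G|$ bad columns raises the rank by at most one, so
\[
  \rank M_s(Y)\leq\binom{m}{\leq rs}+N_0-|G|.
\]
The deficit is therefore at least $|G|-\binom{m}{\leq rs}\geq\tfrac34\mu_n-o(\mu_n)\geq\lceil\mu_n/2\rceil$ for large $n$.

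\textbf{Main obstacle.} The delicate point is Step 2: the structure must hold \emph{deterministically} given $(B,\pi)$ and the noise pattern, whatever the resampled edges are. This works because goodness depends only on anchor-edge noise, the bad columns are handled crudely, and anchors may themselves have arbitrary latent indices. There are two minor checks. The approximation $m=\lceil b\ell_n\rceil$ has to be handled inside the $o(1)$ in the exponent. The entropy bound $\binom{m}{\leq k}\leq 2^{mh_2(k/m)}$ needs $k/m\leq1/2$, which holds for large $n$ by the strict inequality $r\sigma/b<1/2$.
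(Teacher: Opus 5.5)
Your proposal is correct and follows the paper's proof essentially step for step. It uses the same binomial count of clean (your ``good'') nonanchors with a Chernoff bound, and the same observation that each anchor coordinate is a fixed function in $W\leq\RM(m,r)$, so every lifted coordinate lies in $\RM(m,rs)$. It then reaches the same bound $\rank M_s(Y)\leq N_0-|G|+\binom{m}{\leq rs}$. The only difference is cosmetic: you factor the clean columns as $C\,\mathrm{ev}(x_{\mathbf u_j})$, whereas the paper notes that each row of the clean submatrix lies in the image of the evaluation map from $\RM(m,rs)$; these are two sides of the same factorization.
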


\begin{proof}
Condition on $\pi$ and $B$, and write $z_j:=x_{\pi^{-1}(j)}$ for every observed vertex $j$. For each anchor $i\in[T]$, define $g_i(x):=B(\phi(z_i),\phi(x))$. Since $u\mapsto B(\phi(z_i),u)$ is a linear functional on $\FF_2^K$, there is a vector $c_i\in\FF_2^K$ such that $B(\phi(z_i),u)=c_i\cdot u$ for every $u\in\FF_2^K$. Consequently, $g_i(x)=c_i\cdot\phi(x)=\sum_{k=1}^K(c_i)_kf_k(x)\in W\leq\RM(m,r)$.

Call a nonanchor column clean if none of its $T$ anchor edges was replaced, and let $N_{\mathrm{clean}}$ be the number of clean columns. The relevant replacement decisions are disjoint for different nonanchors, so $N_{\mathrm{clean}}\sim\Bin(N_0,(1-\eps)^T)$ with mean $\mu_n$. The assumption $b h_2(r\sigma/b)<1-a\lambda$ implies $1-a\lambda>0$, and hence $\mu_n\to\infty$. A Chernoff bound therefore gives $N_{\mathrm{clean}}\geq3\mu_n/4$ with probability $1-o(1)$.

If a nonanchor $j$ is clean, then
\[
  y_j=\bigl(g_1(z_j),\ldots,g_T(z_j)\bigr).
\]
For each $S\subseteq[T]$ with $|S|\leq s$, define
\[
  p_S(x):=\prod_{i\in S}g_i(x).
\]
After multilinear reduction on $\FF_2^m$, each $p_S$ lies in $\RM(m,rs)$. Let $M_{\mathrm{clean}}$ be the submatrix formed by the clean columns. Its row indexed by $S$ consists of the values $p_S(z_j)$ over the clean columns, so every row lies in the image of the evaluation map from $\RM(m,rs)$ to $\FF_2^{N_{\mathrm{clean}}}$. Since $rs/m=r\sigma/b+o(1)<1/2$,
\[
  \rank M_{\mathrm{clean}}
  \leq\dim\RM(m,rs)
  =\binom{m}{\leq rs}
  =n^{b h_2(r\sigma/b)+o(1)}
  =o(\mu_n).
\]
Each of the $N_0-N_{\mathrm{clean}}$ remaining columns adds at most one to the rank, giving
\[
  \rank M_s(Y)\leq N_0-N_{\mathrm{clean}}+\binom{m}{\leq rs}.
\]
For all sufficiently large $n$, the Reed--Muller dimension $\binom{m}{\leq rs}$ is at most $\mu_n/4$. Together with $N_{\mathrm{clean}}\geq3\mu_n/4$, this gives $N_0-\rank M_s(Y)\geq\mu_n/2$. Since the rank deficit is an integer, it is at least $\tau_n=\lceil\mu_n/2\rceil$ with probability $1-o(1)$.
\end{proof}

\subsection{Completion of the proof}

It remains to choose constants that meet the assumptions used in both rank bounds.

\begin{lemma}
\label{lem:parameters}
After $r,b$ are fixed as in Section~\ref{sec:distributions}, there are rational constants $\sigma,a>0$ and $\eps\in(0,1)$ satisfying all hypotheses of Lemmas~\ref{lem:null} and \ref{lem:planted}.
\end{lemma}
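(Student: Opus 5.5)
The hypotheses to satisfy are: (N1) $a>2$; (N2) $\sigma/a<\min\{\gamma_0,1/2\}$; (N3) $a\,h_2(\sigma/a)>1$; (P1) $r\sigma/b<1/2$; (P2) $b\,h_2(r\sigma/b)<1-a\lambda$ with $\lambda=-\log_2(1-\eps)$. The plan is to fix $a$ first, then choose $\sigma$ small enough for (P1)–(P2) while keeping (N3), and finally take $\eps$ small so that $a\lambda$ is negligible. The tension is between (N3), which needs $\sigma/a$ not too small, and (P2), which needs $r\sigma/b$ small. The way out is to make $a$ large, since then $a\,h_2(\sigma/a)$ can exceed $1$ even when $\sigma$ is tiny.

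First I fix a small target $\sigma$. Since $h_2(u)\to0$ as $u\to0^+$ and $b$, $r$ are already fixed, I can choose a rational $\sigma>0$ with $r\sigma/b<1/2$ and $b\,h_2(r\sigma/b)<1/2$. This settles (P1) and leaves slack in (P2).

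Next I choose $a$ large with $\sigma$ fixed. Writing $u=\sigma/a$, we have $a\,h_2(\sigma/a)=\sigma\,h_2(u)/u$. As $u\to0$ this behaves like $\sigma\log_2(1/u)\to\infty$. So for all sufficiently large rational $a$ we get $a\,h_2(\sigma/a)>1$ (N3), $\sigma/a<\min\{\gamma_0,1/2\}$ (N2), and $a>2$ (N1). The one computation to check is that $h_2(u)/u\geq\log_2(1/u)$ for small $u$, which is immediate because $h_2(u)\geq u\log_2(1/u)$. This step is the only real obstacle: it is where $a\,h_2(\sigma/a)$ grows with $a$ rather than shrinking.

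Finally, with $a$ fixed, I choose a rational $\eps\in(0,1)$ small enough that $a\lambda=-a\log_2(1-\eps)<1/2$. This is possible because $\lambda\to0$ as $\eps\to0$, and rational $\eps$ are dense. Then $b\,h_2(r\sigma/b)<1/2<1-a\lambda$, which gives (P2). The resulting choice $\eps_r:=\eps$ depends only on $r$ through $b$ and $\sigma$.
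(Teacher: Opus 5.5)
Your proposal is correct and matches the paper's proof: you fix $\sigma$ small for (P1) with slack in (P2), then take $a$ large using $a\,h_2(\sigma/a)\geq\sigma\log_2(a/\sigma)\to\infty$, then take $\eps$ small so that $a\lambda<1/2$ (the paper uses slack $1/4$ rather than $1/2$, which makes no difference). The only blemish is your opening sentence, which says $a$ is fixed before $\sigma$, whereas your actual argument fixes $\sigma$ first, correctly and as in the paper.
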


\begin{proof}
Choose rational $\sigma>0$ so small that $r\sigma/b<1/2$ and $b h_2(r\sigma/b)<1/4$. Next choose rational $a>2$ so large that $\sigma/a<\min\{\gamma_0,1/2\}$ and $a h_2(\sigma/a)>1$; this is possible because $a h_2(\sigma/a)\sim\sigma\log_2(a/\sigma)\to\infty$ as $a\to\infty$. Finally, choose rational $\eps\in(0,1)$ so small that $a\lambda<1/2$. Then $b h_2(r\sigma/b)<1/4<1-a\lambda$, so all assumptions of Lemmas~\ref{lem:null} and~\ref{lem:planted} hold.
\end{proof}

We can now finish the proof of the main theorem.

\begin{proof}[Proof of Theorem~\ref{thm:main}]
Choose the constants from Lemma~\ref{lem:parameters} and set $\eps_r:=\eps$. Under $\QQ_n$, Lemma~\ref{lem:null} gives $\rank M_s(Y)=N_0$ with probability $1-o(1)$, so the rank deficit is zero and the test outputs ``null.'' Under $\Tnoise_\eps\PP_n$, Lemma~\ref{lem:planted} gives a rank deficit of at least $\tau_n$ with probability $1-o(1)$, so the test outputs ``planted.'' Thus the test strongly distinguishes the two distributions.

Let $\alpha:=a h_2(\sigma/a)$. Then $R_0=n^{\alpha+o(1)}$. Constructing $M_s(Y)$ takes $O(N_0R_0T)=n^{1+\alpha+o(1)}$ field operations, while Gaussian elimination takes $O(R_0N_0^2)=n^{\alpha+2+o(1)}$. Hence any fixed $C_r>\alpha+2$ gives the claimed deterministic running time. Write $1-\eps=u_\eps/v_\eps$ in lowest terms. The decision rule is equivalent to $2v_\eps^T(N_0-\rho_Y)\geq N_0u_\eps^T$. Thus the threshold is exactly computable in polynomial time.

Choose $c_r$ with $0<c_r<\kappa b^r/(8r!)$. Then $D_n\leq D_\star$ for all sufficiently large $n$, so Proposition~\ref{prop:marginals} and Corollary~\ref{cor:adv} prove item~1. For $n<n_0$, the sampler draws directly from $\PP_n=\QQ_n$. For $n\geq n_0$, the list $v_1,\ldots,v_n$ uses $nK=O(n(\log n)^r)$ advice bits. Given this list, a randomized algorithm samples the $\binom K2$ independent values $B(e_a,e_b)$ for $a<b$, equivalently the entries of $M_B$ above the diagonal, and samples a uniform permutation $\pi$ of $[n]$. It then evaluates every edge and therefore samples exactly from $\PP_n$. The sampler runs in expected polynomial time. For $r=3$, we have $D_n=\Theta((\log n)^2)\geq(\log n)^{3/2}$ for all sufficiently large $n$, and item~1 gives $\Adv_{\leq D_n}(\PP_n,\QQ_n)=0$. Thus $\PP_n$ satisfies the conjecture's hypothesis with $c=1/2$.
\end{proof}

\section{Discussion and open problems}
\label{sec:discussion}

We conclude by discussing the scope of the result and several questions that remain open.

\begin{itemize}
  \item \textbf{Explicitness.} The subspace $W$ and the points $x_1,\ldots,x_n$ are selected by the probabilistic method and then fixed. An explicit replacement must preserve both the absence of short dependencies and the common degree-$r$ representation. Finding a uniformly samplable example remains open.

  \item \textbf{Quantitative questions.} For every fixed integer $k\geq2$, taking $r=k+1$ gives exact agreement through degree $\Theta((\log n)^k)$, but the running-time exponent depends on $k$. It remains open to improve this trade-off.

  \item \textbf{Additional assumptions.} Our result shows that even with a uniform null distribution, vertex-relabeling invariance, exact low-degree agreement, and a fixed rate of resampling, polynomial-time hardness need not follow. Some further restriction on how $\PP_n$ is generated may be needed.
\end{itemize}

\subsection*{Statement on the use of AI}

The author used ChatGPT 5.4, followed by 5.5 and 5.6, for assistance with the proofs. In the initial prompts, the author presented several constructions using two-dimensional codes built from Reed--Muller codes and other codes and asked whether they could be made invariant under all vertex relabelings while preserving an efficient decoding algorithm. The responses indicated that the proposed approach could not satisfy both requirements. The rank argument used in this paper was later developed with the assistance of ChatGPT 5.6, after the author fed it ideas similar in spirit to those in Remark~\ref{remark-girth}. The author independently checked, simplified, and organized every proof and takes full responsibility for all statements and results in the paper.

\subsection*{Acknowledgments}

This work was completed during the author's wonderful visit to Bocconi University in Milan in the summer of 2026. The author thanks Alon Rosen for his warm hospitality during the visit. The author is especially grateful to Tim Kunisky, who first introduced the author to the low-degree conjecture in 2024 and later discussed several related problems. Those conversations provided much of the intuition behind this project.

\bibliographystyle{alpha}
\begingroup
\raggedright
\bibliography{references}

@article{BHL2012,
  author = {Ido Ben-Eliezer and Rani Hod and Shachar Lovett},
  title = {Random Low-Degree Polynomials Are Hard to Approximate},
  journal = {Computational Complexity},
  volume = {21},
  number = {1},
  pages = {63--81},
  year = {2012},
  doi = {10.1007/s00037-011-0020-6},
  url = {https://doi.org/10.1007/s00037-011-0020-6}
}

@inproceedings{BHSS2022,
  author = {Siddharth Bhandari and Prahladh Harsha and Ramprasad Saptharishi and Srikanth Srinivasan},
  title = {Vanishing Spaces of Random Sets and Applications to {Reed--Muller} Codes},
  booktitle = {37th Computational Complexity Conference (CCC 2022)},
  series = {Leibniz International Proceedings in Informatics (LIPIcs)},
  volume = {234},
  pages = {31:1--31:14},
  year = {2022},
  doi = {10.4230/LIPIcs.CCC.2022.31},
  url = {https://doi.org/10.4230/LIPIcs.CCC.2022.31}
}

@inproceedings{BHJK2025,
  author = {Rares-Darius Buhai and Jun-Ting Hsieh and Aayush Jain and Pravesh K. Kothari},
  title = {The Quasi-Polynomial Low-Degree Conjecture Is False},
  booktitle = {2025 {IEEE} 66th Annual Symposium on Foundations of Computer Science ({FOCS})},
  pages = {2577--2590},
  publisher = {IEEE Computer Society},
  year = {2025},
  doi = {10.1109/FOCS63196.2025.00134},
  url = {https://doi.org/10.1109/FOCS63196.2025.00134}
}

@phdthesis{Hopkins2018,
  author = {Samuel B. Hopkins},
  title = {Statistical Inference and the Sum of Squares Method},
  school = {Cornell University},
  year = {2018},
  doi = {10.7298/X4416V82},
  url = {https://hdl.handle.net/1813/59618}
}

@inproceedings{HW2021,
  author = {Justin Holmgren and Alexander S. Wein},
  title = {Counterexamples to the Low-Degree Conjecture},
  booktitle = {12th Innovations in Theoretical Computer Science Conference (ITCS 2021)},
  series = {Leibniz International Proceedings in Informatics (LIPIcs)},
  volume = {185},
  pages = {75:1--75:9},
  year = {2021},
  doi = {10.4230/LIPIcs.ITCS.2021.75},
  url = {https://doi.org/10.4230/LIPIcs.ITCS.2021.75}
}

@incollection{KWB2019,
  author = {Dmitriy Kunisky and Alexander S. Wein and Afonso S. Bandeira},
  title = {Notes on Computational Hardness of Hypothesis Testing: Predictions Using the Low-Degree Likelihood Ratio},
  booktitle = {Mathematical Analysis, Its Applications and Computation},
  series = {Springer Proceedings in Mathematics \& Statistics},
  volume = {385},
  pages = {1--50},
  publisher = {Springer},
  year = {2022},
  doi = {10.1007/978-3-030-97127-4_1},
  url = {https://doi.org/10.1007/978-3-030-97127-4_1}
}

@misc{Wein2025Survey,
  author = {Alexander S. Wein},
  title = {Computational Complexity of Statistics: New Insights from Low-Degree Polynomials},
  year = {2025},
  eprint = {2506.10748},
  archiveprefix = {arXiv},
  primaryclass = {math.ST},
  doi = {10.48550/arXiv.2506.10748},
  note = {arXiv:2506.10748},
  url = {https://arxiv.org/abs/2506.10748}
}

@article{BHKMP2019,
  author = {Boaz Barak and Samuel B. Hopkins and Jonathan A. Kelner and Pravesh K. Kothari and Ankur Moitra and Aaron Potechin},
  title = {A Nearly Tight Sum-of-Squares Lower Bound for the Planted Clique Problem},
  journal = {SIAM Journal on Computing},
  volume = {48},
  number = {2},
  pages = {687--735},
  year = {2019},
  doi = {10.1137/17M1138236},
  url = {https://doi.org/10.1137/17M1138236}
}

@inproceedings{HKPSS2017,
  author = {Samuel B. Hopkins and Pravesh K. Kothari and Aaron Potechin and Prasad Raghavendra and Tselil Schramm and David Steurer},
  title = {The Power of Sum-of-Squares for Detecting Hidden Structures},
  booktitle = {2017 IEEE 58th Annual Symposium on Foundations of Computer Science (FOCS)},
  pages = {720--731},
  year = {2017},
  doi = {10.1109/FOCS.2017.72},
  url = {https://doi.org/10.1109/FOCS.2017.72}
}

@inproceedings{HopkinsSteurer2017,
  author = {Samuel B. Hopkins and David Steurer},
  title = {Efficient {Bayesian} Estimation from Few Samples: Community Detection and Related Problems},
  booktitle = {2017 IEEE 58th Annual Symposium on Foundations of Computer Science (FOCS)},
  pages = {379--390},
  year = {2017},
  doi = {10.1109/FOCS.2017.42},
  url = {https://doi.org/10.1109/FOCS.2017.42}
}

@inproceedings{BBKMW2021,
  author = {Afonso S. Bandeira and Jess Banks and Dmitriy Kunisky and Christopher Moore and Alexander S. Wein},
  title = {Spectral Planting and the Hardness of Refuting Cuts, Colorability, and Communities in Random Graphs},
  booktitle = {Proceedings of the Thirty-Fourth Conference on Learning Theory},
  series = {Proceedings of Machine Learning Research},
  volume = {134},
  pages = {410--473},
  year = {2021},
  url = {https://proceedings.mlr.press/v134/bandeira21a.html}
}

@inproceedings{BandeiraKuniskyWein2020,
  author = {Afonso S. Bandeira and Dmitriy Kunisky and Alexander S. Wein},
  title = {Computational Hardness of Certifying Bounds on Constrained {PCA} Problems},
  booktitle = {11th Innovations in Theoretical Computer Science Conference (ITCS 2020)},
  series = {Leibniz International Proceedings in Informatics (LIPIcs)},
  volume = {151},
  pages = {78:1--78:29},
  year = {2020},
  doi = {10.4230/LIPIcs.ITCS.2020.78},
  url = {https://doi.org/10.4230/LIPIcs.ITCS.2020.78}
}

@article{DKWB2021,
  author = {Yunzi Ding and Dmitriy Kunisky and Alexander S. Wein and Afonso S. Bandeira},
  title = {The Average-Case Time Complexity of Certifying the Restricted Isometry Property},
  journal = {IEEE Transactions on Information Theory},
  volume = {67},
  number = {11},
  pages = {7355--7361},
  year = {2021},
  doi = {10.1109/TIT.2021.3112823},
  url = {https://doi.org/10.1109/TIT.2021.3112823}
}

@article{DKWB2024,
  author = {Yunzi Ding and Dmitriy Kunisky and Alexander S. Wein and Afonso S. Bandeira},
  title = {Subexponential-Time Algorithms for Sparse {PCA}},
  journal = {Foundations of Computational Mathematics},
  volume = {24},
  number = {3},
  pages = {865--914},
  year = {2024},
  doi = {10.1007/s10208-023-09603-0},
  url = {https://doi.org/10.1007/s10208-023-09603-0}
}

@article{DingDuLi2025,
  author = {Jian Ding and Hang Du and Zhangsong Li},
  title = {Low-Degree Hardness of Detection for Correlated {Erd{\H{o}}s--R{\'e}nyi} Graphs},
  journal = {The Annals of Statistics},
  volume = {53},
  number = {5},
  pages = {1833--1856},
  year = {2025},
  doi = {10.1214/25-AOS2517},
  url = {https://doi.org/10.1214/25-AOS2517}
}

@article{SchrammWein2022,
  author = {Tselil Schramm and Alexander S. Wein},
  title = {Computational Barriers to Estimation from Low-Degree Polynomials},
  journal = {The Annals of Statistics},
  volume = {50},
  number = {3},
  pages = {1833--1858},
  year = {2022},
  doi = {10.1214/22-AOS2179},
  url = {https://doi.org/10.1214/22-AOS2179}
}

@misc{Mao2026,
  author = {Songtao Mao},
  title = {Near Optimal Algorithms for Noisy {$k$}-{XOR} under Low-Degree Heuristic},
  year = {2026},
  eprint = {2604.10457},
  archiveprefix = {arXiv},
  primaryclass = {cs.CC},
  doi = {10.48550/arXiv.2604.10457},
  note = {arXiv:2604.10457},
  url = {https://arxiv.org/abs/2604.10457}
}

@inproceedings{BBHLS2021,
  author = {Matthew S. Brennan and Guy Bresler and Samuel B. Hopkins and Jerry Li and Tselil Schramm},
  title = {Statistical Query Algorithms and Low Degree Tests Are Almost Equivalent},
  booktitle = {Proceedings of the Thirty-Fourth Conference on Learning Theory},
  series = {Proceedings of Machine Learning Research},
  volume = {134},
  pages = {774--774},
  year = {2021},
  url = {https://proceedings.mlr.press/v134/brennan21a.html}
}

@article{MontanariWein2025,
  author = {Andrea Montanari and Alexander S. Wein},
  title = {Equivalence of Approximate Message Passing and Low-Degree Polynomials in Rank-One Matrix Estimation},
  journal = {Probability Theory and Related Fields},
  volume = {191},
  pages = {181--233},
  year = {2025},
  doi = {10.1007/s00440-024-01322-z},
  url = {https://doi.org/10.1007/s00440-024-01322-z}
}

@inproceedings{ChooDorsi2021,
  author = {Davin Choo and Tommaso d'Orsi},
  title = {The Complexity of Sparse Tensor {PCA}},
  booktitle = {Advances in Neural Information Processing Systems},
  volume = {34},
  pages = {7993--8005},
  year = {2021},
  url = {https://proceedings.neurips.cc/paper/2021/hash/42a6845a557bef704ad8ac9cb4461d43-Abstract.html}
}

@inproceedings{BreslerHuang2022,
  author = {Guy Bresler and Brice Huang},
  title = {The Algorithmic Phase Transition of Random {$k$}-{SAT} for Low Degree Polynomials},
  booktitle = {2021 IEEE 62nd Annual Symposium on Foundations of Computer Science (FOCS)},
  pages = {298--309},
  year = {2022},
  eprint = {2106.02129},
  archiveprefix = {arXiv},
  primaryclass = {cs.DS},
  url = {https://arxiv.org/abs/2106.02129}
}

@article{GamarnikJagannathWein2024,
  author = {David Gamarnik and Aukosh Jagannath and Alexander S. Wein},
  title = {Hardness of Random Optimization Problems for Boolean Circuits, Low-Degree Polynomials, and Langevin Dynamics},
  journal = {SIAM Journal on Computing},
  volume = {53},
  number = {1},
  pages = {1--46},
  year = {2024},
  doi = {10.1137/22M150263X},
  url = {https://doi.org/10.1137/22M150263X}
}

@inproceedings{BogdanovKothariRosen2023,
  author = {Andrej Bogdanov and Pravesh K. Kothari and Alon Rosen},
  title = {Public-Key Encryption, Local Pseudorandom Generators, and the Low-Degree Method},
  booktitle = {Theory of Cryptography},
  series = {Lecture Notes in Computer Science},
  volume = {14369},
  pages = {268--285},
  publisher = {Springer},
  year = {2023},
  doi = {10.1007/978-3-031-48615-9_10},
  url = {https://doi.org/10.1007/978-3-031-48615-9_10}
}

@inproceedings{JiaVijayaraghavan2026,
  author = {He Jia and Aravindan Vijayaraghavan},
  title = {Low-Degree Method Fails to Predict Robust Subspace Recovery},
  booktitle = {Proceedings of the Thirty-Ninth Conference on Learning Theory},
  series = {Proceedings of Machine Learning Research},
  volume = {336},
  pages = {3751--3781},
  publisher = {PMLR},
  year = {2026},
  url = {https://proceedings.mlr.press/v336/jia26a.html}
}

@misc{BlasiokEtAl2026,
  author = {Jaros{\l}aw B{\l}asiok and Paul Lou and Alon Rosen and Madhu Sudan},
  title = {Expanders Meet {Reed--Muller}: Easy Instances of Noisy {$k$-{XOR}}},
  year = {2026},
  eprint = {2604.04188},
  archiveprefix = {arXiv},
  primaryclass = {cs.CC},
  doi = {10.48550/arXiv.2604.04188},
  note = {arXiv:2604.04188v1},
  url = {https://arxiv.org/abs/2604.04188v1}
}

@article{ChenDingGongLi2026,
  author = {Guanyi Chen and Jian Ding and Shuyang Gong and Zhangsong Li},
  title = {A Computational Transition for Detecting Correlated Stochastic Block Models by Low-Degree Polynomials},
  journal = {The Annals of Statistics},
  volume = {54},
  number = {1},
  pages = {226--251},
  year = {2026},
  doi = {10.1214/25-AOS2565},
  url = {https://doi.org/10.1214/25-AOS2565}
}

@article{DhawanMaoWein2025,
  author = {Abhishek Dhawan and Cheng Mao and Alexander S. Wein},
  title = {Detection of Dense Subhypergraphs by Low-Degree Polynomials},
  journal = {Random Structures \& Algorithms},
  volume = {66},
  number = {1},
  pages = {e21279},
  year = {2025},
  doi = {10.1002/rsa.21279},
  url = {https://doi.org/10.1002/rsa.21279}
}

@inproceedings{KuniskyYu2024,
  author = {Dmitriy Kunisky and Xifan Yu},
  title = {Computational Hardness of Detecting Graph Lifts and Certifying Lift-Monotone Properties of Random Regular Graphs},
  booktitle = {2024 IEEE 65th Annual Symposium on Foundations of Computer Science (FOCS)},
  pages = {1621--1633},
  year = {2024},
  doi = {10.1109/FOCS61266.2024.00101},
  url = {https://arxiv.org/abs/2404.17012}
}

@article{MaoWein2025,
  author = {Cheng Mao and Alexander S. Wein},
  title = {Optimal Spectral Recovery of a Planted Vector in a Subspace},
  journal = {Bernoulli},
  volume = {31},
  number = {2},
  pages = {1114--1139},
  year = {2025},
  doi = {10.3150/24-BEJ1763},
  url = {https://doi.org/10.3150/24-BEJ1763}
}

@misc{Li2025CorrelatedSpiked,
  author = {Zhangsong Li},
  title = {The Algorithmic Phase Transition in Correlated Spiked Models},
  year = {2025},
  eprint = {2511.06040},
  archiveprefix = {arXiv},
  primaryclass = {math.ST},
  doi = {10.48550/arXiv.2511.06040},
  note = {arXiv:2511.06040},
  url = {https://arxiv.org/abs/2511.06040}
}

@inproceedings{BandeiraEtAl2022FranzParisi,
  author = {Afonso S. Bandeira and Ahmed El Alaoui and Samuel B. Hopkins and Tselil Schramm and Alexander S. Wein and Ilias Zadik},
  title = {The {Franz--Parisi} Criterion and Computational Trade-offs in High Dimensional Statistics},
  booktitle = {Advances in Neural Information Processing Systems},
  volume = {35},
  pages = {33831--33844},
  year = {2022},
  url = {https://proceedings.neurips.cc/paper_files/paper/2022/hash/daff682411a64632e083b9d6665b1d30-Abstract-Conference.html}
}

@article{Wein2021MIS,
  author = {Alexander S. Wein},
  title = {Optimal Low-Degree Hardness of Maximum Independent Set},
  journal = {Mathematical Statistics and Learning},
  volume = {4},
  number = {3/4},
  pages = {221--251},
  year = {2021},
  doi = {10.4171/MSL/25},
  url = {https://doi.org/10.4171/MSL/25}
}

@inproceedings{hsieh2026rigorous,
  author = {Jun-Ting Hsieh and Daniel M. Kane and Pravesh K. Kothari and Jerry Li and Sidhanth Mohanty and Stefan Tiegel},
  title = {Rigorous Implications of the {Low-Degree} Heuristic},
  booktitle = {Proceedings of the 58th Annual ACM Symposium on Theory of Computing},
  pages = {1763--1770},
  publisher = {Association for Computing Machinery},
  year = {2026},
  doi = {10.1145/3798129.3800883},
  url = {https://doi.org/10.1145/3798129.3800883}
}
\endgroup

\end{document}